\DeclareMathOperator{\KS}{C}
\DeclareMathOperator{\poly}{poly}
\DeclareMathOperator{\polylog}{polylog}
\DeclareMathOperator{\prob}{Prob}
\DeclareMathOperator{\Ext}{Ext}
\DeclareMathOperator{\Con}{Con}
\newcommand{\U}{\mathcal{U}}
\newcommand{\V}{\mathcal{V}}
\newcommand{\eps}{\varepsilon}
\begin{document}

\title{Improving the Space-Bounded Version of Muchnik's Conditional Complexity Theorem via ``Naive'' Derandomization\thanks{Supported by ANR Sycomore, NAFIT ANR-08-EMER-008-01 and RFBR~09-01-00709-a grants.}
}
\titlerunning{Naive derandomization}
\author{Daniil Musatov}
\institute{Lomonosov Moscow State University,
            \email{musatych@gmail.com}
}

\date{}
\maketitle

\begin{abstract}
Many theorems about Kolmogorov complexity rely on existence of combinatorial objects with specific properties. Usually the probabilistic method gives such objects with better parameters than explicit constructions do. But the probabilistic method does not give ``effective'' variants of such theorems, i.e. variants for resource-bounded Kolmogorov complexity. We show that a ``naive derandomization'' approach of replacing these objects by the output of Nisan-Wigderson pseudo-random generator may give polynomial-space variants of such theorems.

Specifically, we improve the preceding polynomial-space analogue of Muchnik's conditional complexity theorem. I.e., for all $a$ and $b$ there exists a program $p$ of least possible length that transforms $a$ to $b$ and is simple conditional on $b$. Here all programs work in polynomial space and all complexities are measured with logarithmic accuracy instead of polylogarithmic one in the previous work.
 \end{abstract}

\section{Introduction}
Many statements about Kolmogorov complexity may be proven by applying some combinatorial constructions like expanders or extractors. Usually these objects are characterized by some parameters, and one may say which parameters are ``better''. Very often the probabilistic method allows one to obtain these objects with much better parameters than explicit constructions do. But exploiting the probabilistic method causes exponential-space brute-force search for an object satisfying the necessary property. And if this search is performed while describing some string to obtain an upper bound on its complexity then this bound cannot be repeated for a resource-bounded version of complexity, even for the polynomial-space one. On the other hand, replacing the probabilistic method by an explicit construction weakens the statement due to worse parameters. 

We present a technique that combines advantages of both probabilistic and explicit construction methods. The key idea is to substitute a random object with a pseudo-random one still possessing the necessary property. The employed property of a pseudo-random generator is the indistinguishability one: its output cannot be distinguished from a random string by boolean circuits of constant depth and polynomial size. The Nisan-Wigderson generator satisfies this condition. If the necessary property can be tested by such circuits then it holds for a pseudo-random object with approximately the same probability as for a truly random object, so a search among all seeds may be perormed. Unfortunately, it is not clear how to build such a circuit for the extractor property and similar ones, so we relax the property in a way that allows both proving the theorem and building polynomial constant-depth circuits.

This ``naive derandomization'' idea has been recently applied by Andrei Romashchenko (\cite{romash}) in another situation: a probabilistic bit-probe scheme with one-sided error for the membership problem is constructed.

By exploiting the new method we improve the previous result~\cite{tocs} generalizing Muchnik's conditional complexity theorem. The original theorem~\cite{muchnik-codes} states that for all $a$ and $b$ of length $n$ there exists a program $p$ of length $\KS(a|b)+O(\log n)$ that transforms $b$ to $a$ and has complexity $O(\log n)$ conditional on $a$. In~\cite{tocs} this result is restated for space-bounded complexity with gap rised from $O(\log n)$ to $O(\log^3 n)$. The main idea was to employ a property of extractors, proven in~\cite{fortnow}: in an extractor graph in any sufficiently big subset $S$ of the left part there are few vertices with all right-part neighbours having indegree from $S$ twice greater than average. We refer to this property as to ``low-congesting'' one. Best explicit extractor constructions yield a space-bounded version of the theorem with polylogarithmic precision. In this paper we replace an explicit extractor by a pseudo-random graph, that does not necessary have the extractor property, but enjoys the low-congesting property for ``relevant'' subsets $S$. This replacement leads to decreasing the precision back to logarithmic term.

The rest of the paper is organized as follows. In Sect.~\ref{pre} we give formal definitions of all involved objects and formulate necessary results. In Sect.~\ref{muchnik} we formally state our space-bounded variant of Muchnik's theorem and specify all details of the proof. 

\section{Preliminaries}\label{pre}
\subsection{Kolmogorov complexity}
Let $\V$ be a two-argument Turing machine. We refer to the first argument as to the ``program'' and to the second argument as to the ``argument''. (Plain) Kolmogorov complexity of a string $x$ conditional on $y$ with respect to $\V$ is the length of a minimal program $p$ that transforms $y$ to $x$, i.e. 
$$\KS_{\V}(x\mid y)=\min\{p\colon \V(p,y)=x\}$$
There exists an optimal machine $\U$ that gives the least complexity up to an additive term. Specifically, $\forall\V\,\exists c\,\forall x,y \KS_{\U}(x|y)<\KS_{\V}(x|y)+c$. We employ such a machine $\U$, drop the subscript and formulate all theorems up to a constant additive term. The unconditional complexity $\KS(x)$ is the complexity with empty condition $\KS(x\mid\varepsilon)$, or the length of a shortest program \textit{producing} $x$.

Now we define the notion of resource-bounded Kolmogorov complexity. Roughly speaking, it is the length of a minimal program that transforms $y$ to $x$ efficiently. Formally, Kolmogorov complexity of a string $x$ conditional on $y$ in time $t$ and space $s$ with respect to $\V$ is the length of a shortest program $p$ such that $\V(p,y)=x$, and the computation of $\V(p,y)$ works in $t$ steps and uses $s$ cells of memory. This complexity is denoted by $\KS^{t,s}_{\V}(x\mid y)$. Here the choice of $\V$ alters not only complexity, but also time and space bounds. However, there still exists an optimal machine in the following sense:
\begin{theorem}
There exists a machine $\U$ such that for any machine $\V$ there exists a constant $c$ such that for all $x$, $y$, $s$ and $t$ it is true that $\KS^{s,t}_{\U}(x\mid y)\leq\KS^{cs,ct\log t}_{\V}(x\mid y)+c$.
\end{theorem}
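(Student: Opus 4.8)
The plan is to fix a single efficient universal machine $\U$ and to show that whatever computation witnesses the complexity on the right can be replayed by $\U$ inside the budget on the left, with every overhead collected into the one constant $c$ that is permitted to depend on $\V$. First I would pin down a concrete model for $\U$: a multitape Turing machine with a large work alphabet, one tape carrying a self-delimiting encoding $\langle\V\rangle$ of the machine to be simulated, one tape holding the current configuration of $\V$, and a fixed number of auxiliary tapes for bookkeeping. For a given $\V$ the corresponding $\U$-program is the concatenation $\langle\V\rangle p$ of this encoding with a $\V$-program $p$; the self-delimiting prefix lets $\U$ recover $\langle\V\rangle$ and $p$ and lengthens the program by only $O(1)$ bits depending on $\V$, which supplies the additive constant in $\KS^{s,t}_{\U}(x\mid y)\le\KS^{cs,ct\log t}_{\V}(x\mid y)+c$.

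I would then carry out the resource accounting in two independent pieces. For space, $\U$ keeps the contents of $\V$'s tapes in compressed form, packing a block of $\V$-cells into a single cell of its larger alphabet, so that a configuration using $S$ cells of $\V$ occupies only a constant multiple of $S$ cells of $\U$; the ratio of the two alphabets and the number of $\V$-tapes enter only through $c$, and this is exactly where the constant factor linking the space bound $s$ on $\U$ to the bound $cs$ on $\V$ lives. For time, each simulated step of $\V$ forces $\U$ to locate the matching entry of $\V$'s transition table on the $\langle\V\rangle$-tape and to update the compressed configuration, so it is the per-step cost, rather than the program length, that governs the time inflation.

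The hard part will be holding this per-step time cost to the single logarithmic factor that the statement permits, i.e. realizing a $T$-step run of $\V$ in $O(T\log T)$ steps of $\U$ instead of incurring a polynomial slowdown. I would control it with the oblivious two-tape simulation of multitape machines due to Hennie and Stearns, which trades an arbitrary number of $\V$-tapes for a fixed number of $\U$-tapes at the cost of exactly one logarithmic factor in time, while keeping the transition lookup and head motion within that same factor; the number of $\V$-tapes and the length of $\langle\V\rangle$ again feed only into $c$, and the $\log t$ appearing in the time bound is precisely this factor. Putting the two pieces together, the program $\langle\V\rangle p$ computes $x$ from $y$ on $\U$ with only a constant-factor overhead in space and a single logarithmic-factor overhead in time relative to the run of $p$ on $\V$; since these are exactly the factors by which the budget $(cs,\,ct\log t)$ on the right exceeds the budget $(s,t)$ on the left, the claimed inequality follows. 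I would close by noting that every constant used, the packing ratio, the tape count, the length of $\langle\V\rangle$, and the Hennie and Stearns factor, depends on $\V$ alone, so one $c$ serves uniformly for all $x$, $y$, $s$ and $t$.
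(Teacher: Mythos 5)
The paper itself gives no proof of this theorem: it is stated as a known invariance result for resource-bounded complexity, so there is nothing internal to compare against. Your proposal is the standard proof of that standard result, and it is essentially sound: a fixed universal machine running $\langle\V\rangle p$, with the self-delimiting encoding of $\V$ supplying the additive constant, alphabet/tape recoding supplying the constant factor in space, and the Hennie--Stearns oblivious simulation supplying the single $\log t$ factor in time.

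Two points deserve attention. First, the inequality as printed in the paper attaches the inflated budget to $\V$: it reads $\KS^{s,t}_{\U}(x\mid y)\leq\KS^{cs,ct\log t}_{\V}(x\mid y)+c$, which taken literally would require the fixed machine $\U$ to run \emph{faster and in less space} than an arbitrary $\V$ --- something no simulation argument can deliver (and which, applied with $\V=\U$, would collapse resource bounds up to additive constants). What your simulation actually proves, and what the standard statement (e.g.\ in Li--Vit\'anyi) says, is $\KS^{cs,\,ct\log t}_{\U}(x\mid y)\leq\KS^{s,t}_{\V}(x\mid y)+c$, with the overhead on $\U$'s side. You silently proved the corrected version while asserting that your overhead factors ``are exactly the factors by which the budget on the right exceeds the budget on the left''; they are not --- in the printed statement the inflated budget is on the wrong side, and you should flag this as a typo rather than paper over it. Second, since the theorem bounds time and space \emph{simultaneously}, you need the Hennie--Stearns simulation to be space-efficient as well as time-efficient; this is true (the zone construction stores the simulated tapes in geometrically growing blocks, so the simulating tape occupies $O(S)$ cells when the simulated machine uses $S$), but it is a property worth stating explicitly, as a generic $t\log t$ simulation claim alone does not guarantee it. A last cosmetic quibble: you describe $\U$ as packing blocks of $\V$-cells into single cells of its ``large'' alphabet, but $\U$ is fixed before $\V$ is chosen, so in general the recoding goes the other way (several $\U$-cells per $\V$-cell); either direction gives the constant-factor space overhead, so nothing breaks.
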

This paper deals with space bounds only, so the time-bound superscript in all notations is dropped. Also, the machine subscript is dropped as before and all theorems are formulated with a constant additive term in all complexities and a constant multiplicative term in all space bounds.

\subsection{Extractors}
An extractor is a function that extracts randomness from weak random sources. A $k$-weak random source of length $n$ is a probabilistic distribution on $\{0,1\}^n$ with minentropy greater than $k$. The last condition means that no particular string occurs with probability greater than $2^{-k}$. An extractor with parameters $n$, $m$, $d$, $k$, $\eps$ is a function $\Ext\colon\{0,1\}^n\times\{0,1\}^d\to\{0,1\}^m$, such that for any independent $k$-weak random source $x$ of length $n$ and uniform distribution $u$ on $\{0,1\}^d$ the induced distribution $\Ext(x,u)$ on $\{0,1\}^m$ is $\eps$-close to uniform, that is, for any set $Y\subset\{0,1\}^m$ its probability differs from its fraction by at most $\eps$. This is interpreted as follows: an extractor receives $n$ weakly random bits and $d$ truly random bits independent from the first argument and outputs $m$ almost random bits.

Like any two-argument function, an extractor may be viewed as a bipartite (multi-)graph: the first argument indexes a vertex in the left part, the second argument indexes an edge going from this vertex, and the value indexes a vertex in the right part which this edge directs to. That is, the graph has $N=2^n$ vertices on the left, $M=2^m$ vertices on the right, and all left-part vertices have degree $D=2^d$. Throughout the paper, we say that a bipartite graph has parameters ($n$, $m$, $d$) if the same holds for it. For the sake of clarity we usually omit these parameters in extractor specifications. The extractor property may be also formulated in terms of graphs: for any left-part subset $S$ of size greater than $K=2^k$ and for any right-part subset $Y$ the fraction of edges directing from $S$ to $Y$ among all edges directing from $S$ differs from the fraction of vertices from $Y$ among all right-part vertices by at most $\eps$. Put formally, $\left||Y|/M-E(S,Y)/E(S,M)\right|<\eps$, where $E(S,T)$ is the number of edges diredting from $S$ to $T$ and $M$ slightly abusively denotes both the right part and the number of vertices in it. A proof of equivalence may be found in~\cite{extractor-bounds}.

It is proven by the probabilistic method (see, for example, \cite{extractor-bounds}) that for all $n$, $k$ and $\eps$ there exists an extractor with parameters $d=\log(n-k)+2\log(1/\eps)+O(1)$ and $m=k+d-2\log(1/\eps)-O(1)$. Nevertheless, no explicit (that is, running in polynomial time) construction of such an extractor is known. Best current results (\cite{extractor-explicit}, \cite{trevisan}) for $m=k$ use $d=\polylog n$ truly random bits. A number of explicit extractors including those in \cite{wigderson}, \cite{dvir} and \cite{guruswami} use $O(\log n)$ truly random bits but output only $(1-\alpha)k$ almost random bits. Such constructions are insufficient for our goals.

\subsection{Nisan-Wigderson generators}
The Nisan-Wigderson pseudo-random generator is a deterministic polynomal-time function that generates $n$ pseudorandom bits from $\polylog(n)$ truly random bits. The output of such a generator cannot be distinguished from a truly random string by small circuits. Specifically, we exploit the following theorem from~\cite{NW97}. (The statement was initially proven by Nisan in paper~\cite{Nisan91}).
\begin{theorem}\label{nw}
For any constant $d$ there exists a family of functions $G_n\colon \{0,1\}^k\to\{0,1\}^n$, where $k=O(\log^{2d+6} n)$, such that two properties hold:
\begin{description}
\item [Computability:] $G$ is computable in workspace $\poly(k)$;
\item [Indistinguishability:] For any family of circuits $C_n$ of size $\poly(n)$ and depth $d$ for any positive polynomial $p$ for all large enough $n$ it holds that:
$$\left|\prob_x\{C_n(G_n(x))=1\}-\prob_y\{C_n(y)=1\}\right|<\frac1{p(n)},$$
where $x$ is distributed uniformly in $\{0,1\}^k$ and $y$~--- in $\{0,1\}^n$.
\end{description}
\end{theorem}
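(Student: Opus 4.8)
The plan is to follow the classical Nisan--Wigderson construction, assembling $G_n$ from an explicit function that is hard to approximate by constant-depth circuits together with a combinatorial design of small pairwise intersection. For the hard function I would take parity $\oplus_\ell\colon\{0,1\}^\ell\to\{0,1\}$, whose inapproximability by $\mathrm{AC}^0$ follows from H{\aa}stad's switching lemma: any depth-$d'$ circuit of size $s$ agrees with $\oplus_\ell$ on at most a $\frac12+2^{-\Omega(\ell/(\log s)^{d'-1})}$ fraction of inputs. The design is a family $S_1,\dots,S_n\subseteq[k]$ with $|S_i|=\ell$ and $|S_i\cap S_j|\le m$ for $i\ne j$; the explicit polynomial construction over a field of size $\approx\ell$ gives such a family with $k=\ell^2$, intersection $m=O(\log n/\log\ell)$, and any desired $n\le\ell^{\,m}$. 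The generator is then $G_n(x)=\bigl(\oplus_\ell(x|_{S_1}),\dots,\oplus_\ell(x|_{S_n})\bigr)$, where $x|_{S_i}$ is the restriction of the seed $x\in\{0,1\}^k$ to the coordinates indexed by $S_i$.

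Computability is the easy half: the design is explicitly computable, and each output bit is just the parity of $\ell$ designated seed bits, so $G_n$ can be evaluated in workspace polynomial in $k$.

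For indistinguishability I would argue by contradiction through the standard hybrid and next-bit-prediction reduction. Suppose a depth-$d$, size-$\poly(n)$ circuit $C_n$ distinguishes $G_n(x)$ from uniform with advantage at least $1/p(n)$. Replacing the output bits by truly random ones one at a time produces a sequence of hybrid distributions, and a telescoping argument yields an index $i$ whose bit can be predicted from the earlier output bits with advantage at least $1/(n\,p(n))$. Fixing all seed coordinates outside $S_i$, every other output bit $\oplus_\ell(x|_{S_j})$ becomes a function of at most $|S_i\cap S_j|\le m$ of the free coordinates and is therefore computable by a DNF of size $2^m$. Hard-wiring these formulas into the predictor yields a circuit of depth $d+O(1)$ and size $\poly(n)\cdot 2^m$ that approximates $\oplus_\ell$ with advantage $1/(n\,p(n))$, contradicting the switching-lemma correlation bound once $\ell$ is chosen large enough.

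The main obstacle is the parameter bookkeeping rather than any isolated difficulty. I must choose $\ell$ so that the correlation $2^{-\Omega(\ell/(\log s)^{d'-1})}$ beats $1/(n\,p(n))$ for $s=\poly(n)\cdot 2^m$ and $d'=d+O(1)$; since $m=O(\log n/\log\ell)$ keeps $\log s=O(\log n)$, this forces $\ell=\Theta(\log^{\,d+3}n)$, and then the design cost $k=\ell^2$ delivers exactly the claimed seed length $k=O(\log^{2d+6}n)$. The two delicate points to verify are that the depth increase absorbed by the hybrid step and by the DNFs for the neighbouring bits remains a constant, so that the parity lower bound at depth $d'$ still applies, and that the intersection $m$ stays small enough for $2^m$ to be subpolynomial while $\ell$ is large enough for the lower bound to bite.
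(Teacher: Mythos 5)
The paper does not prove this theorem at all: it is quoted verbatim as an external result, attributed to Nisan~\cite{Nisan91} and Nisan--Wigderson~\cite{NW97}. Your proposal correctly reconstructs the standard proof from those sources --- parity as the $\mathrm{AC}^0$-hard function via H{\aa}stad's correlation bound, the polynomial-based design with $k=\ell^2$, and the hybrid/next-bit-predictor reduction with DNFs of size $2^m$ for the neighbouring output bits --- and your parameter bookkeeping ($\ell=\Theta(\log^{d+3}n)$, hence $k=O(\log^{2d+6}n)$, with the depth overhead of the reduction bounded by a small constant) is exactly what yields the exponent $2d+6$ in the statement.
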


By rescaling the parameters we get the following
\begin{corollary}\label{nw-modified}
For any constant $d$ there exists a family of functions $G_n\colon \{0,1\}^k\to\{0,1\}^N$, where $k=\poly(n)$ and $N=2^{O(n)}$, such that two properties hold:
\begin{itemize}
\item $G$ is computable in workspace $\poly(n)$;
\item For any family of circuits $C_n$ of size $2^{O(n)}$ and depth $d$, for any constant $c$ and for all large enough $n$ it holds that:
$$\left|\prob_x\{C_n(G_n(x))=1\}-\prob_y\{C_n(y)=1\}\right|<2^{-cn}.$$
\end{itemize}
Here all constants in $\poly$- and $O$-notations may depend on $d$ but do not depend on $k$, $n$ and $C_n$.
\end{corollary}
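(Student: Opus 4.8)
The plan is to obtain the corollary by a direct reparametrization of Theorem~\ref{nw}: we instantiate the Nisan--Wigderson family at an output length that is exponential in $n$, and then reindex the family by $n$. Concretely, fix the depth $d$, let $a$ be a positive integer constant (depending only on $d$) to be chosen, and set $N=2^{an}$. Define the new family by $G_n := G_N$, where on the right-hand side $G_N\colon\{0,1\}^k\to\{0,1\}^N$ is the generator furnished by Theorem~\ref{nw} for output length $N$ (so $N$ is an honest power of two). All that remains is to re-express each of the three guarantees of Theorem~\ref{nw} --- seed length, workspace, and indistinguishability --- in terms of $n$ rather than $N$.

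First I would dispatch the seed length and the computability claim, which are immediate. Theorem~\ref{nw} gives seed length
$$k = O(\log^{2d+6} N) = O\bigl((an)^{2d+6}\bigr) = O(n^{2d+6}),$$
so $k=\poly(n)$ as required, with the implied constant depending only on $d$ (through the exponent $2d+6$) and on the fixed $a$. Likewise the workspace bound $\poly(k)$ becomes $\poly(\poly(n))=\poly(n)$, which is exactly the stated computability in workspace $\poly(n)$.

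The only clause needing genuine (if routine) care is indistinguishability, since both the admissible circuit size and the error bound shift regime under the substitution; this is the main ``obstacle'', and it is pure bookkeeping. A circuit size of $2^{O(n)}$ satisfies $2^{O(n)} = N^{O(1)} = \poly(N)$, so any family $C_n$ of size $2^{O(n)}$ and depth $d$ is admissible in Theorem~\ref{nw} at output length $N$. For the error, Theorem~\ref{nw} lets us choose the positive polynomial $p$ freely; given a target constant $c$, I would take $p(N)=N^{\lceil c/a\rceil}$, so that
$$\frac{1}{p(N)} \le N^{-c/a} = 2^{-an\cdot(c/a)} = 2^{-cn}.$$
Theorem~\ref{nw} then bounds the distinguishing advantage by $1/p(N)\le 2^{-cn}$ for all large enough $N$; since $N=2^{an}$ is monotone in $n$, ``large enough $N$'' coincides with ``large enough $n$'', which is precisely the quantifier demanded by the corollary. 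Finally, every implied constant inherited from Theorem~\ref{nw} depends only on $d$ and the fixed $a$, not on $k$, $n$, or the particular circuit family $C_n$, matching the concluding remark. Thus the whole argument reduces to verifying that each exponentially small target error $2^{-cn}$ is reachable through a suitable polynomial $p$, and that the two size/error regimes line up correctly.
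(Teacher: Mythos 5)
Your proposal is correct and is exactly the ``rescaling of parameters'' the paper invokes (the paper gives no further proof of this corollary): substitute output length $N=2^{an}$, reindex by $n$, and translate seed length, workspace, circuit size, and the error polynomial accordingly. The only point you gloss over---that the circuits $C_n$ live only on input lengths of the form $2^{an}$ and must be padded to a full family before Theorem~\ref{nw} applies---is a routine technicality that does not affect correctness.
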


The last corollary implies the following basic principle:
\begin{lemma}\label{mainprinciple}
Let $\mathcal{C}_n$ be some set of combinatorial objects encoded by boolean strings of length $2^{O(n)}$. Let $\mathcal{P}$ be some property satisfied for fraction at least $\alpha$ fraction of objects in $\mathcal{C}_n$ that can be tested by a family of circuits of size $2^{O(n)}$ and constant depth. Then for sufficiently large $n$ the property $\mathcal{P}$ is satisfied for a fraction at least $\alpha/2$ of values of $G_n$, where $G_n$ is the function from the previous corollary.
\end{lemma}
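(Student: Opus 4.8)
The plan is to feed the property-testing circuit for $\mathcal{P}$ directly into the indistinguishability clause of Corollary~\ref{nw-modified}, using it as the distinguisher. First I would fix, for each $n$, the circuit $C_n$ that witnesses the testability hypothesis: $C_n$ takes as input the $2^{O(n)}$-bit encoding of an object of $\mathcal{C}_n$ and outputs $1$ exactly when that object satisfies $\mathcal{P}$. By assumption $C_n$ has size $2^{O(n)}$ and some constant depth $d$, which is precisely the form required by the corollary. Since every string of the encoding length corresponds to an object of $\mathcal{C}_n$ and vice versa, the uniform distribution on such strings realizes the uniform distribution on $\mathcal{C}_n$, so the hypothesis on $\mathcal{P}$ gives $\prob_y\{C_n(y)=1\}\geq\alpha$.

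Next I would invoke Corollary~\ref{nw-modified} with this same depth $d$, obtaining a generator $G_n$ whose output length is $2^{O(n)}$. A minor point is to arrange, by rescaling the parameter or by padding the encoding, that this output length coincides with the encoding length of the objects, so that $C_n$ may legitimately be applied to $G_n(x)$. The indistinguishability bound then reads, for any constant $c$ and all sufficiently large $n$,
$$\left|\prob_x\{C_n(G_n(x))=1\}-\prob_y\{C_n(y)=1\}\right|<2^{-cn},$$
and in particular $\prob_x\{C_n(G_n(x))=1\}>\alpha-2^{-cn}$. The left-hand quantity is exactly the fraction of seeds $x$ for which the value $G_n(x)$ satisfies $\mathcal{P}$, i.e. the fraction of values of $G_n$ enjoying the property.

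Finally I would absorb the error term: choosing $c$ large enough and then $n$ large enough makes $2^{-cn}<\alpha/2$, so that the fraction of values of $G_n$ satisfying $\mathcal{P}$ exceeds $\alpha-2^{-cn}>\alpha/2$, which is the claim.

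I expect the only genuine point of care — rather than a true obstacle — to be the bookkeeping behind the $\alpha/2$ conclusion once $\alpha$ is allowed to shrink with $n$. The absorption $2^{-cn}<\alpha/2$ is immediate when $\alpha$ is bounded below by a positive constant, and it still goes through whenever $\alpha\geq 2^{-c'n}$ for some fixed $c'$, since the constant $c$ in the corollary is ours to pick and we may simply take $c>c'$. Everything else is a direct translation of the given testing circuit into the distinguisher against which the Nisan--Wigderson generator is guaranteed to fool, so no additional combinatorial input is needed.
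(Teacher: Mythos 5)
Your proposal is correct and is exactly the argument the paper intends: the paper states Lemma~\ref{mainprinciple} as a direct consequence of Corollary~\ref{nw-modified} without spelling out a proof, and the implicit argument is precisely yours --- use the constant-depth testing circuit as the distinguisher, so that indistinguishability transfers the probability bound $\alpha$ to the generator's outputs up to an error $2^{-cn}$, which is absorbed for large $n$. Your closing remark about how small $\alpha$ may be is a sensible clarification; in the paper's application ($\alpha$ a positive constant, Lemma~\ref{lowcon-derandom}) the absorption is immediate.
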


\subsection{Constant-depth circuits for approximate counting}
It is well-known that constant-depth circuits cannot compute the majority function. Moreover, they cannot compute a general threshold function that equals $1$ if and only if the fraction of $1$'s in its input exceeds some threshold $\alpha$. Nevertheless, one can build such circuits that compute threshold functions approximately. Namely, the following theorem holds:
\begin{theorem}[\cite{ajtai}]\label{ajtai}
Let $\alpha\in(0,1)$. Then for any (constant) $\eps$ there exists a constant-depth and polynomial-size circuit $C$ such that $C(x)=0$ if the fraction of $1$'s in $x$ is less than $\alpha-\eps$ and $C(x)=1$ if the fraction of $1$'s in $x$ is greater than $\alpha+\eps$.
\end{theorem}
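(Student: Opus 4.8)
The plan is to reduce approximate counting to the \emph{exact} threshold on a small random subsample, and then to amplify and derandomize. First I would observe that a uniformly random subset $S\subseteq\{1,\dots,n\}$ of size $s=O(\log n/\eps^2)$ estimates the fraction of $1$'s well: by the Chernoff bound, for a fixed input the empirical fraction of $1$'s on $S$ deviates from the true fraction by more than $\eps/2$ only with probability at most $1/\poly(n)$. On a sample of logarithmic size the exact threshold ``fraction of $1$'s on $S$ exceeds $\alpha$'' is computable by a DNF (or CNF) of size $2^{O(s)}=\poly(n)$ and depth $2$. Thus each sample yields a cheap constant-depth ``vote'' that is correct with high probability whenever the true fraction lies outside $(\alpha-\eps,\alpha+\eps)$.

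Next I would hardwire a family of $m=\poly(n)$ samples $S_1,\dots,S_m$ and argue, by a second Chernoff bound over the independent votes followed by a union bound over all $2^n$ inputs, that there is a fixed choice of the $S_i$ for which \emph{every} input outside the forbidden window produces an overwhelming majority of correct votes: for every input with fraction above $\alpha+\eps$ at least a $(1-\gamma)$-fraction of the votes equals $1$, and for every input with fraction below $\alpha-\eps$ at most a $\gamma$-fraction equals $1$, where $\gamma=1/\poly(n)$. This is where worst-case correctness is bought: to survive the union bound over $2^n$ inputs the per-input failure probability must be pushed below $2^{-n}$, which is exactly what forces the polynomial number of samples. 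The effect of this step is that the original constant gap $2\eps$ has been amplified into an almost maximal gap $(\gamma,1-\gamma)$.

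It remains to combine the $m$ votes, i.e.\ to decide whether their fraction of $1$'s is near $0$ or near $1$. I cannot do this with an exact majority gate: as recalled just before the statement, constant-depth circuits provably fail to compute majority, and this is precisely the main obstacle of the whole argument. The point is that the amplified gap is now so wide that a shallow monotone ``tribes'' gadget suffices: an $\mathrm{OR}$ of $\mathrm{AND}$s (respectively an $\mathrm{AND}$ of $\mathrm{OR}$s) over hardwired blocks separates inputs that are almost-all-$1$ from inputs that are almost-all-$0$, and the separating family of blocks again exists by a union bound over all inputs. If one round does not widen the gap all the way to the regime handled by a single such gadget, I would iterate the sampling step a constant number of times, each iteration pushing the gap closer to maximal; since the depth added per iteration is constant and only a constant number of iterations is needed, the final circuit has constant depth and polynomial size. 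The delicate part throughout is the bookkeeping of the three competing constraints --- constant total depth, polynomial size, and $2^{-n}$ per-input error --- and checking that the tribes gadget at the bottom of the recursion is genuinely error-free for \emph{all} inputs once the gap is wide enough.
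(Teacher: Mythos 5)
The paper contains no proof of this statement to compare against: Theorem~\ref{ajtai} is imported verbatim as a black box from Ajtai's paper \cite{ajtai}, and is only \emph{used} (inside the proof of Lemma~\ref{muchnik-circuit}) to supply the approximate-counting stages of the circuit there. So your attempt can only be judged on its own merits, and as a proof of the statement in the form the paper needs it --- a possibly non-uniform, constant-depth, polynomial-size circuit for constant $\alpha$ and constant $\eps$ --- your architecture (sample, vote, amplify via Chernoff plus a union bound over the $2^n$ inputs, then combine with a monotone depth-2 gadget) is sound and is essentially the classical Ajtai/Ben-Or-style probabilistic construction. Non-uniformity is not a defect here, since the paper itself builds non-uniform circuits in Lemma~\ref{muchnik-circuit}; note only that Ajtai's actual theorem is stronger than what you prove (it is uniform and tolerates subconstant $\eps$), though the paper never needs that strength.

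The one step that must be pinned down is exactly the one you flag at the end, and the resolution is not optional bookkeeping: a polynomial-size depth-2 gadget \emph{cannot} separate all $m$-bit vectors with at most $\gamma m$ ones from all vectors with at least $(1-\gamma)m$ ones when, say, $\gamma=1/n$ and $m=n^3$. Every AND-block of your monotone gadget must have fan-in larger than $\gamma m$ (a smaller block lies entirely inside the ones of some vector that should be rejected), and then each block is all-ones on at most an $e^{-\Omega(\gamma^2 m)}=e^{-\Omega(n)}$ fraction of the vectors that should be accepted, so exponentially many blocks would be required. There are two consistent ways to finish, and you must commit to one: (i) read your own union bound literally and require the block family to be correct only on the at most $2^n$ vote vectors actually realizable by inputs $x\in\{0,1\}^n$ --- then $O(n)$ random blocks of size $O(n/\log n)$ succeed with positive probability, and the final circuit has depth about $4$; or (ii) insist on worst-case correctness of the gadget over all vote vectors, but tune parameters so that $\gamma^2 m=O(\log n)$ while $\gamma m\ge Cn/\log n$ (for instance $\gamma=\Theta(\log^2 n/n)$, $m=\Theta(n^2/\log^3 n)$), in which case blocks of size $\gamma m+1$, taken $\poly(n)$ times at random, cover every accepted vector. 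Either instantiation completes your proof; without one of them the final step, as literally stated (``separates inputs that are almost-all-$1$ from inputs that are almost-all-$0$''), is false for the natural parameter choices in your second paragraph.
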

Note that nothing is promised if the fraction of $1$'s is between $\alpha-\eps$ and $\alpha+\eps$. So, the fact that $C(s)=0$ guarantees only that the fraction of $1$'s is at most $\alpha+\eps$, and $C(s)=1$~--- that it is at least $\alpha-\eps$.

\section{Muchnik's theorem}\label{muchnik}
\subsection{Subject overview}
An.~Muchnik's theorem~\cite{muchnik-codes} on conditional Kolmogorov complexity states that:

\begin{theorem}\label{main-theorem}
Let $a$ and $b$ be two binary strings such that $\KS(a)<n$ and $\KS(a|b)<k$.
Then there exists a string~$p$ such that

\textbullet\ $\KS(a|p,b) = O(\log n)$;

\textbullet\ $\KS(p) \le  k+O(\log n)$;

\textbullet\ $\KS(p|a) =  O(\log n)$.
\end{theorem}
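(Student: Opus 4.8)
The plan is to exhibit $p$ as a right-hand vertex of a bipartite graph with parameters $(n,\,k+O(\log n),\,O(\log n))$, whose left part indexes the candidate values of $a$ (all strings with $\KS(a)<n$) and whose right part indexes candidate programs of length $k+O(\log n)$, each left vertex having $\poly(n)$ neighbours. The distinguished set is $S=\{a'\colon \KS(a'\mid b)<k\}$; it has size at most $2^k$, and --- this is what keeps everything inside polynomial space --- it can be enumerated by running all programs of length below $k$, each in polynomial space. Granting that the graph is \emph{low-congesting} for $S$, I would match $S$ into the right part greedily: process $S$ in a fixed order and assign to each vertex its first neighbour whose current load is below the threshold $2|S|D/M=\poly(n)$. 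The low-congesting property guarantees that only few vertices are ``dangerous'' (all neighbours already loaded), so a bounded number of rounds, each applied to the dangerous remainder, exhausts $S$ while keeping every program's load at $\poly(n)$. The matching stays recomputable in polynomial space, since the load of any program can be recovered by re-running the greedy procedure rather than by storing $2^{\,k+O(\log n)}$ counters.

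From such a matching the three bounds should follow. The bound $\KS(p)\le k+O(\log n)$ holds because $p$ is a string of that length. For $\KS(a\mid p,b)=O(\log n)$ one rebuilds $S$ from $b$ and $k$, lists the $\poly(n)$ preimages of $p$ inside $S$ under the matching, and names $a$ by a logarithmic index. For $\KS(p\mid a)=O(\log n)$ one hands over only the $O(\log n)$-bit edge index $j$ and sets $p$ equal to the $j$-th neighbour of $a$; note that this last step must compute that neighbour \emph{without access to $b$}, so it forces the graph to be reconstructible from $a$ and a logarithmic amount of advice alone.

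The new ingredient --- and the reason the precision of \cite{tocs} should drop from $O(\log^3 n)$ to $O(\log n)$ --- is to draw the graph from Corollary~\ref{nw-modified} rather than from an explicit extractor. A uniformly random graph is an extractor and therefore, by \cite{fortnow}, is low-congesting for every large set, in particular for our $S$; the aim is to transport low-congesting for this $S$ to the output of $G_n$ through Lemma~\ref{mainprinciple}. What makes this possible is that low-congesting for a \emph{fixed} $S$ is a nested approximate-counting statement: ``few $a\in S$ have all their neighbours of $S$-indegree above $2|S|D/M$''. Both the inner count (over the $\poly(n)$ neighbours of each $a$) and the outer count (over $S$) are threshold conditions, so by stacking the approximate threshold gates of Theorem~\ref{ajtai} one obtains a constant-depth, size-$2^{O(n)}$ circuit certifying the property, whereas the full extractor property, with its quantifier over all right-hand sets $Y$, admits no such circuit. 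One must set the inner threshold a hair below $2|S|D/M$ and the outer threshold a hair above the probabilistic bound, so that Ajtai's two-sided slack still lets a genuine random graph pass and still leaves a passing graph usable by the matching. Lemma~\ref{mainprinciple} then supplies a good seed, and sweeping all seeds while evaluating the circuit is a polynomial-space computation; this is exactly the step that converts the construction into a bound on space-bounded complexity.

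The step I expect to be hardest is the collision between the last two paragraphs: the certifying circuit takes the set $S$ as input, so the seed it selects, and hence the graph, depends on $b$ through $S$ --- yet the third bound must regenerate the graph from $a$ alone, and a $b$-dependent graph would only yield $\KS(p\mid a)\le\KS(b)+O(\log n)$. Removing this dependence is precisely the ``relaxation of the extractor property'' promised in the introduction: one needs a single, $b$-independent, polynomial-space graph that is nonetheless low-congesting for the particular set $S$. Since a small circuit can certify low-congesting for only one set at a time, and the indistinguishability error of Corollary~\ref{nw-modified} is too large to union-bound over all sets of size $2^k$, the resolution cannot be to demand the full extractor property outright; the delicate work is to isolate a $b$-free testable surrogate that still forces low-congesting on the single relevant $S$, and it is on this point that the whole argument turns.
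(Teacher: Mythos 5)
You end your write-up by conceding that the pivotal step --- obtaining a single $b$-independent graph that is nonetheless low-congesting for the $b$-dependent set $S$ --- is unresolved, and that concession is a genuine gap: the whole argument does turn on exactly this point, and your claim that ``a small circuit can certify low-congesting for only one set at a time'' is where you go wrong. The paper's resolution (Lemma~\ref{relevant-sets} together with Lemma~\ref{muchnik-circuit}) is that one never needs all $\binom{2^n}{2^k}$ sets of size below $2^k$: the only sets that matter are the \emph{relevant} ones, $S=\{x \mid \KS^s(x\mid b)<k\}$ over all $b$ of length $n$ and all space bounds $s$, and for fixed $b$ these sets are nested in $s$, so there are fewer than $2^k$ distinct ones per $b$ and at most $2^{n+k}=2^{O(n)}$ in total. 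Since the certifying circuit is non-uniform, this entire family can be hard-wired: one replicates the per-set approximate-counting circuit (built from Theorem~\ref{ajtai} with exactly the two-sided threshold slack you describe, e.g.\ thresholds at $2$ versus $2.01$ times the average indegree) once for each relevant $S$ and takes the conjunction, which preserves size $2^{O(n)}$ and constant depth. A random graph is an extractor with constant probability and hence, by Lemma~\ref{fortnow-new}, passes every conjunct; Lemma~\ref{mainprinciple} then yields a single seed whose output graph is simultaneously low-congesting for \emph{all} relevant sets, so the graph depends only on $n$, $k$ (and a space parameter), not on $b$, and the third bound $\KS(p\mid a)=O(\log n)$ goes through. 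No union bound over the indistinguishability error is involved; the universal quantification over $S$ lives inside the circuit.

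Note also that for the statement actually posed --- the resource-unbounded Theorem~\ref{main-theorem} --- your entire derandomization apparatus is unnecessary, and the $b$-dependence problem you struggle with never arises. With no resource bounds, exhaustive search is free: a random bipartite graph with parameters ($n$, $m=k$, $d=O(\log n)$) is a ($k$,~$\eps$)-extractor with positive probability, an extractor is low-congesting for \emph{every} set of size at most $2^k$ (the proof of Lemma~\ref{fortnow-new} uses nothing about relevance), and the first such graph in canonical order is specified by $O(\log n)$ bits. What your sketch then still owes is the treatment of the vertices of $S$ that \emph{are} congested: your greedy multi-round matching is in the spirit of the on-line-matching proof in \cite{tocs}, but to make the rounds terminate you need the low-congesting property again on each residual ``dangerous'' set, with shrunk size parameter; the paper handles this by iterating the construction on $\Con_{2.01}S$ with $m_1=k_1=\log(2.01\eps K)$ and so on, for at most $O(\log n)$ rounds, each round costing only the $O(\log n)$ bits needed to name the round index. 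In the unbounded setting this iteration is immediate since the extractor property covers all subsets; in the space-bounded setting it is precisely why Lemma~\ref{congested-enum} must keep the iterated systems $(\Con_\alpha)^r\mathcal{S}$ enumerable in space $O(s)+\poly(n)$ uniformly in $r$.
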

The constants hidden in $O(\log n)$ do not
depend on $n,k,a,b,p$.

Informally, this theorem says that there exists a program $p$
that transforms $b$ to $a$, has the minimal possible complexity
$\KS(a|b)$ (up to a logarithmic term) and, moreover, can be easily
obtained from $a$. (The last requirement is crucial, otherwise
the statement trivially reformulates the definition
of conditional Kolmogorov complexity.)

Several proofs of this theorem are known. All of them rely on the existence of some combinatorial objects. The original proof in~\cite{muchnik-codes} leans upon the existence of bipartite graphs with specific expander-like property. Two proofs by Musatov, Romashchenko and Shen~\cite{tocs} use extractors and graphs allowing large on-line matchings. Explicit constructions of extractors provide variants of Muchnik's theorem for resource-bounded Kolmogorov complexity. Specifically, the following theorem is proven in~\cite{tocs}:
\begin{theorem}\label{muchnik-space}
Let $a$ and $b$ be binary strings of length $n$, and $k$ and $s$ be integers
such that $\KS^{s}(a|b)<k$. Then
there exists a binary string $p$, such that

\begin{itemize}
\item[\textbullet] $\KS^{O(s)+\poly(n)}(a|p,b) = O(\log^3 n);$

\item[\textbullet] $\KS^{s}(p)\le k+O(\log n);$

\item[\textbullet] $\KS^{\poly(n)}(p|a)= O(\log^3 n),$

\end{itemize}
where all constants in $O$- and $\poly$-notations depend only on
the choice of the optimal description method.
\end{theorem}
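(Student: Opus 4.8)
The plan is to realise $p$ as a right-part neighbour of $a$ in a suitable \emph{explicit} bipartite graph, chosen so that $p$ isolates $a$ inside the set of strings that are simple conditional on $b$. Concretely, fix an explicit extractor with parameters $(n,m,d)$, viewed as a bipartite graph with left part $\{0,1\}^n$, right part $\{0,1\}^m$ and left-degree $D=2^d$; take $m=k$, a small constant error $\eps$, and $d=\polylog n$ (the seed length of the best available explicit construction, which is what produces the $\log^3 n$ terms). Put $S=\{a'\in\{0,1\}^n : \KS^{s}(a'|b)<k\}$. Since there are fewer than $2^k$ programs of length below $k$, we have $|S|<2^k=M$, and by hypothesis $a\in S$. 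The goal is to exhibit a neighbour $p$ of $a$ whose indegree from $S$ is at most twice the average, i.e. at most $2|S|D/M\le 2D$.

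Granting such a $p$, all three bounds follow. First, $p=\Ext(a,j)$ for some seed $j\in\{0,1\}^d$; as the extractor is explicit, $p$ is recoverable from $a$ and $j$ in space $\poly(n)$, so $\KS^{\poly(n)}(p|a)\le d+O(1)=O(\log^3 n)$. Second, $p\in\{0,1\}^k$ is produced literally, giving $\KS^{s}(p)\le k+O(\log n)$. Third, from $b$ one enumerates $S$ in space $O(s)$ (run every program of length $<k$ on $b$ within space $s$), and, streaming through $S$ while testing adjacency to $p$ via the explicit graph and advancing a $\polylog n$-bit counter, one locates $a$ among the at most $2D$ neighbours of $p$ in $S$; crucially one never stores this quasi-polynomially large list, and since the paper bounds only space the exponential running time is harmless. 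The position of $a$ is an index of $\log(2D)=O(\log^3 n)$ bits, whence $\KS^{O(s)+\poly(n)}(a|p,b)=O(\log^3 n)$.

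It remains to guarantee the good neighbour for our particular $a$. The low-congesting property of extractors (\cite{fortnow}) — obtained by applying the extractor inequality first to the set of overloaded right vertices (those of $S$-indegree above twice the average), bounding it below $\eps M$, and then to the bad left vertices (those all of whose neighbours are overloaded) — shows that at most $K$ vertices of $S$ are bad, where $K$ is the extractor's threshold. Choosing $K=2^{k-1}$ makes the bad set at most half of $S$, so one repeats the construction on it using a family of explicit extractors sharing the right part $M=2^k$ but with geometrically decreasing thresholds $K_i=2^{k-1-i}$; each layer halves the size bound, so after at most $k\le n$ layers every string of $S$ — in particular $a$ — is assigned a neighbour. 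The layer index costs only $O(\log n)$ extra bits in each of the three descriptions, and since the average indegree only shrinks across layers the within-layer bound $2D$ on the indegree of $p$ is preserved.

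The main obstacle is precisely this passage from ``most $a$'' to ``every $a$'': the extractor promises a good neighbour only for a large majority of $S$, and the layered recursion is what repairs this while keeping the overhead logarithmic. The attendant difficulty is the space-bounded bookkeeping — the decoder for $\KS(a|p,b)$ must reconstruct the entire layered decomposition of $S$ up to the relevant layer within space $O(s)+\poly(n)$, recomputing $S$ from $b$ and re-applying the explicit extractors at each level. The $O(\log^3 n)$ terms in the first and third bounds are exactly the seed length $d$ and the within-layer index $\log(2D)$, both governed by the seed length of the best explicit extractor; it is this quantity that the paper subsequently reduces to $O(\log n)$ by replacing the explicit extractor with a Nisan--Wigderson pseudo-random graph.
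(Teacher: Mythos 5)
Your overall skeleton---an explicit extractor viewed as a bipartite graph on $\{0,1\}^n\times\{0,1\}^d\to\{0,1\}^k$, the set $S=\{x:\KS^{s}(x|b)<k\}$, taking $p$ to be a neighbour of $a$ whose indegree from $S$ is $O(D)$, and recursing on the vertices that have no such neighbour---is the same as in the paper (this is the proof from \cite{tocs} that Sect.~\ref{sect-proof} reproduces with a pseudo-random graph in place of an explicit extractor). The derivation of the three bounds from a good neighbour $p$ is correct, up to the minor omission that enumerating $S$ in space $O(s)+\poly(n)$ requires cutting off non-halting programs with a step counter, as in Lemma~\ref{relevant-enum}.

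The genuine gap is in your layered recursion, i.e.\ precisely in the step you yourself single out as the main obstacle. You keep the right part fixed at $M=2^k$ and let the extractor thresholds decrease geometrically, $K_i=2^{k-1-i}$, while insisting on seed length $d=\polylog n$ throughout. Such extractors do not exist once $i\gg d$, for a counting reason: a source concentrated on $K_i$ left vertices has output support of size at most $K_iD=2^{k-1-i+d}$, while $\eps$-closeness to the uniform distribution on $\{0,1\}^k$ forces the support to cover at least $(1-\eps)2^k$ points, so necessarily $d\ge i-O(1)$. (This is the entropy-loss bound of \cite{extractor-bounds}, $m\le k_i+d-2\log(1/\eps)+O(1)$; it is the very obstruction the paper quotes to explain why $O(\log n)$-seed explicit extractors output only $(1-\alpha)k$ bits.) Even the weaker property your argument actually uses---that the edges leaving any set of $K_i$ left vertices cannot all land in a fixed half of the right part---already forces $K_iD\ge M/2$, hence the same bound on $d$. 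Your recursion, however, must run for up to $\Theta(k)=\Theta(n)$ layers, because each layer shrinks the bad-set bound only by the constant factor $1/2$ (and no constant-factor-per-layer scheme can do better: going from $2^k$ down to $1$ takes $\Theta(k)$ steps). So from depth roughly $\polylog n$ onward the objects you invoke do not exist, and the surviving bad set, which may still contain $a$, has size about $2^{k-\polylog n}$ and can no longer be processed. The paper's fix is different and essential: at iteration $i$ the fingerprint length is shrunk together with the threshold, $m_i=k_i=\log(\text{current size bound})$, so a valid extractor (equivalently, low-congesting graph) with $\polylog n$ seed exists at every level; $p$ then becomes a shorter string at deeper levels, which is harmless since $\KS^{s}(p)\le k+O(\log n)$ is only an upper bound, and the indegree of a good $p$ stays $O(D)$ because the clot threshold $\alpha DK_i/M_i=\alpha D$ scales along with the right part.
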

One cannot reduce residuals to $O(\log n)$ since all explicit extractors with such seed length output only $(1-\alpha)k$ bits, but $p$ is taken as an output of an extractor and should have length $k$. 
However, an application of our derandomization method will decrease conditional complexities from $O(\log^3 n)$ to $O(\log n)+O(\log\log s)$ at the price of increasing the space limit in the last complexity from $\poly(n)$ to $O(s)+\poly(n)$. If $s$ is polynomial in $n$ then all space limits become also polynomial and all complexity discrepancies become logarithmic.

\subsection{Proof overview}
Before we proceed with the detailed proof, let us present its high-level description. The main idea is the same as in all known proofs: $p$ is a fingerprint (or hash value) for $a$ constructed in some specific way. This fingerprint is chosen via some underlying bipartite graph. Its left part is treated as the set of all strings of length $n$ (i.e., all possible $a$'s) and its right part is treated as the set of all possible fingerprints. To satisfy the last condition each left-part vertex should have small outdegree, since in that case the fingerprint is described by its ordinal number among $a$'s neighbours. To satisfy the first condition each fingerprint should have small indegree from the strings that have low complexity conditional on $b$ (for arbitrary $b$). 

If resources are unbounded then the existence of a graph satisfying all conditions may be proven by the probabilistic method and the graph itself may be found by brute-force search. In the resource-bounded case we suggest to replace a random graph by a pseudo-random one and prove that it still does the job. The proof proceeds in several steps. Firstly, in Sect.~\ref{sect-lowcon} we define the essential graph property needed to our proof. We call this property low-congestion. It follows from the extractor property, but not vice versa. Secondly, in Sect.~\ref{sect-space-enum} we specify the instrumental notion of space-bounded enumerability and prove some lemmas about it in connection to low-congesting graphs. Next, in Sect.~\ref{muchnik-derandom} we employ these lemmas to prove that the low-congesting property is testable by small circuits. Hence, by applying the main principle (lemma~\ref{mainprinciple}) this property is satisfied for pseudo-random graphs produced by the NW generator as well as for truly random ones. Moreover, a seed producing a graph with this property may be found in polynomial space. Finally, in Sect.~\ref{sect-proof} we formulate our version of Muchnik's theorem and prove it using the graph obtained on the previous step. I.e., we describe the procedure of choosing a fingerprint in this graph and then calculate all complexities and space requirements and assure that they do not exceed their respective limitations.

\subsection{Low-congesting graphs}\label{sect-lowcon}
In fact, the proof in~\cite{tocs} does not use the extractor property, but employs only its corollary. In this section we accurately define this corollary in a way that allows derandomization.

Fix some bipartite graph with parameters ($n$, $m$, $d$) and an integer $k<n$. Let there be a system $\mathcal{S}$ of subsets of its left part with the following condition: each $S\in\mathcal{S}$ contains less than $2^k$ vertices, and the whole system $\mathcal{S}$ contains $2^{O(n)}$ sets (note that there are $\sum_{i=1}^{2^k}C_{2^n}^i>2^{O(n)}$ different sets of required size, so the last limitation is not trivial). We refer to such systems as to ``\emph{relevant}'' ones. Having fixed a system $\mathcal{S}$, let us call the sets in it \emph{relevant} also.

\begin{lemma}\label{relevant-sets}
Let $\mathcal{S}_k=\{S\subset\{0,1\}^n\mid\exists b\,\exists s\, |b|=n\ \text{and}\ S=\{x\mid \KS^s(x|b)<k\}\}$. Then the system $\mathcal{S}_k$ is relevant.
\end{lemma}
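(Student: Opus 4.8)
The plan is to verify the two defining conditions of relevance directly: that each set in $\mathcal{S}_k$ has size below $2^k$, and that the whole family has only $2^{O(n)}$ members.

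For the size bound I would argue that a string $x$ lies in $S=\{x\mid \KS^s(x|b)<k\}$ exactly when there is a program $p$ with $|p|<k$ and $\U(p,b)=x$ (running within space $s$). Since each such $p$ determines at most one output and there are strictly fewer than $2^k$ programs of length below $k$, the set $S$ contains fewer than $2^k$ vertices. This settles the first requirement immediately, with no real work.

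Counting the distinct sets is the step that needs an idea, because the pair $(b,s)$ ranges over an infinite family: $b$ has $2^n$ possible values, but $s$ is an unbounded integer, so a naive count over pairs $(b,s)$ gives nothing. The key observation is monotonicity in the space parameter. If a computation of $\U(p,b)$ halts with output $x$ using at most $s$ cells, then it also halts within any larger bound $s'\ge s$; hence $\KS^{s'}(x|b)\le\KS^s(x|b)$, and therefore the sets nest: writing $S_{b,s}$ for the set above, we get $S_{b,s}\subseteq S_{b,s'}$ whenever $s\le s'$. Thus for each fixed $b$ the sets $\{S_{b,s}\}_s$ form an increasing chain. Because every member of the chain has cardinality strictly below $2^k$, a strictly increasing chain can pass through at most $2^k$ distinct sets. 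Summing over the $2^n$ choices of $b$, the total number of distinct sets in $\mathcal{S}_k$ is at most $2^{n+k}\le 2^{2n}=2^{O(n)}$, which is exactly what relevance demands.

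The only genuine obstacle is the unboundedness of $s$, which at first glance makes the family infinite; everything hinges on the monotonicity remark that collapses each $b$'s contribution to at most $2^k$ sets. Once that is in place, the remaining bookkeeping is routine and I expect no further difficulty.
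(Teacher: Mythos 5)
Your proof is correct and follows essentially the same route as the paper's: the size bound via counting programs of length less than $k$, and then the key monotonicity observation that for fixed $b$ the sets $S_{b,s}$ form a nested increasing chain, so each $b$ contributes at most $2^k$ distinct sets, giving $2^n\cdot 2^k=2^{O(n)}$ in total. The paper states this more tersely (``the described sets are expanding while $s$ is rising''), but the argument is identical, including the point that the chain's length is bounded because every set in it has size below $2^k$.
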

\begin{proof}
By a standard counting argument, each set $S\in\mathcal{S}_k$ contains less than $2^k$ elements. If $b$ is fixed then the described sets are expanding while $s$ is rising. Since the largest set is smaller than $2^k$, there are less than $2^k$ different sets for a fixed $b$. Since there are $2^n$ different strings $b$, the total size of $\mathcal{S}_k$ is bounded by $2^n2^k=2^{O(n)}$.
\end{proof}
Considering a modification of the upper system $\mathcal{S}_{k,\bar s}=\{S\subset\{0,1\}^n\mid\exists b\,\exists s<\bar s\, S=\{x\mid \KS^s(x|b)<k\}\}$, one may note that it is relevant as well.

Now fix some relevant system $\mathcal{S}$ and take an arbitrary set $S$ in it. Call the \emph{$\alpha$-clot} for $S$ the set of right-part vertices that have more than $\alpha DK/M$ neighbours in $S$ (that is, at least $\alpha$ times more than on average). Call a vertex $x\in S$ \emph{$\alpha$-congested} (for $S$) if all its neighbours lie in the $\alpha$-clot for $S$. Say that $G$ is \emph{($\alpha$,~$\beta$)-low-congesting} if there are less than $\beta K$ $\alpha$-congested vertices in any relevant $S$.

Following~\cite{fortnow}, we prove the next lemma:
\begin{lemma}\label{fortnow-new}
Let $G$ be an extractor graph with parameters $n$, $m$, $d$, $k$, $\eps$. Then for any $\alpha>1$ the graph $G$ is ($\alpha$,~$\frac{\alpha}{\alpha-1}\eps$)-low-congesting.
\end{lemma}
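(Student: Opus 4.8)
The plan is to establish the low-congesting property by a direct double-counting of edges, invoking the extractor property twice with the $\alpha$-clot playing the role of the right-hand target set. Fix a relevant set $S$; write $C$ for its $\alpha$-clot, $T$ for the set of $\alpha$-congested vertices of $S$, and recall $K=2^k$, $D=2^d$, $M=2^m$. I would work throughout with the edge count $E(\cdot,\cdot)$ and two elementary observations. By the definition of the clot, every vertex of $C$ has more than $\alpha DK/M$ neighbours in $S$, so $E(S,C)>|C|\cdot\alpha DK/M$. By the definition of congestion, every vertex of $T$ sends all $D$ of its edges into $C$, so $E(T,C)=|T|\,D$.

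First I would bound the clot. The extractor property applied with right-hand set $Y=C$ gives $E(S,C)/E(S,M)<|C|/M+\eps$; since $E(S,M)=|S|\,D\le KD$ this yields the upper bound $E(S,C)<(|C|/M+\eps)\,KD$. Comparing it with the clot lower bound $E(S,C)>|C|\,\alpha DK/M$, the factor $KD$ cancels and $|C|\,\alpha/M<|C|/M+\eps$ rearranges to $|C|<\frac{\eps}{\alpha-1}M$. This is the only use of the hypothesis $\alpha>1$, and it is what produces the denominator $\alpha-1$ in the final constant.

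Next I would bound the congested vertices. Since $T\subseteq S$, the equality $E(T,C)=|T|\,D$ together with the same upper bound $E(S,C)<(|C|/M+\eps)\,KD$ gives $|T|<(|C|/M+\eps)\,K$. Substituting $|C|/M<\eps/(\alpha-1)$ from the previous step, the right-hand side becomes $\left(\frac{\eps}{\alpha-1}+\eps\right)K=\frac{\alpha}{\alpha-1}\eps\,K$, which is exactly $\beta K$ for $\beta=\frac{\alpha}{\alpha-1}\eps$. As $S$ was an arbitrary relevant set, this shows $G$ is $(\alpha,\beta)$-low-congesting.

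The one point that genuinely needs care, and which I expect to be the main obstacle, is that the extractor property is stated only for left-hand sets of size \emph{greater} than $K$, whereas every relevant set has \emph{fewer} than $K=2^k$ vertices. I would resolve this by padding: enlarge $S$ to a set $S^{+}$ of size $K$ (or just above) by adjoining arbitrary left vertices and apply the extractor property to $S^{+}$ with $Y=C$. Since $S\subseteq S^{+}$ we still have $E(S,C)\le E(S^{+},C)$ and $E(T,C)\le E(S^{+},C)$, so both inequalities above go through with $K$ in place of $|S|$; any discrepancy caused by padding to size slightly above $K$ affects only lower-order terms and leaves the constant $\frac{\alpha}{\alpha-1}\eps$ unchanged.
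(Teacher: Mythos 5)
Your proof is correct and takes essentially the same route as the paper's: both apply the extractor inequality with the $\alpha$-clot as the right-hand set, once to bound the clot size by $\frac{\eps}{\alpha-1}M$ and once (via the fact that each congested vertex sends all $D$ of its edges into the clot) to bound the number of congested vertices by $\frac{\alpha}{\alpha-1}\eps K$, merely phrased in edge counts rather than fractions. Your padding step is likewise the paper's own move --- it treats the relevant set as a subset of an arbitrary set of size (essentially) $K$ and uses monotonicity of congestion under enlargement.
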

\begin{proof}
In fact in an extractor graph there are less than $\frac{\alpha}{\alpha-1}\eps K$ $\alpha$-congested vertices in any set $S$ of size $K$, not only in relevant ones. We may treat $S$ as an arbitrary set of size exactly $K$: since a congested vertex in a subset is also a congested vertex in the set, an upper bound for the number of congested vertices in the set holds also for a subset.

Let $Y$ be the $\alpha$-clot for $S$, and $|Y|=\delta M$. Then the fraction $|Y|/M$ of vertices in $Y$ equals $\delta$ and the fraction $E(S,Y)/E(S,M)$ of edges directing from $S$ to $Y$ is greater than $\alpha\delta$ (by the definition of clot). A standard counting argument implies only $\delta\le\frac1{\alpha}$, but by the extractor property $E(S,Y)/E(S,M)-|Y|/M<\eps$, so $(\alpha-1)\delta<\eps$, i.e. $\delta<\frac1{\alpha-1}\eps$. Next, let $T\subset S$ be the set of $\alpha$-congested vertices in $S$, and $|T|=\beta K$. All edges from $T$ direct to vertices in $Y$, so at least $D|T|=\beta DK$ edges direct from $S$ to $Y$. In other words, the fraction of edges from $S$ to $Y$ is at least $\beta$. By the extractor property it must differ from the fraction of vertices in $Y$ (that equals $\delta$) by at most $\eps$. So, $\beta<\delta+\eps<\frac1{\alpha-1}\eps+\eps=\frac{\alpha}{\alpha-1}\eps$. Putting it all together, the number of $\alpha$-congested vertices in any relevant $S$ is less than $\frac{\alpha}{\alpha-1}\eps K$, so the graph is ($\alpha$,~$\frac{\alpha}{\alpha-1}\eps$)-low-congesting, q.e.d.
\end{proof}

\subsection{Space-bounded enumerability}\label{sect-space-enum}
Say that a system $\mathcal{S}$ is \emph{enumerable} in space $q$ if there exists an algorithm with two inputs $i$ and $j$ working in space $q$ that either outputs the $j$th element of the $i$th set from $\mathcal{S}$ or indicates that at least one of the inputs falls out of range. Note that for a polynomial space bound enumeration is equivalent to recognition: if one may enumerate a set then one may recognize whether a given element belongs to it by sequentially comparing it to all enumerated strings, and if one may recognize membership to a set then one may enumerate it by trying all possible strings and including only those accepted by the recognition algorithm. Only small auxiliary space is needed to perform these modifications.

\begin{lemma}\label{relevant-enum}
The system $\mathcal{S}_{k,\bar s}=\{S\subset\{0,1\}^n\mid\exists b\,\exists s<\bar s\ |b|=n\ \text{and}\ S=\{x\mid \KS^s(x|b)<k\}\}$ is enumerable in space $O(\bar s)+\poly(n)$.
\end{lemma}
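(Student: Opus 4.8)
The plan is to index the system by the pairs $(b,s)$ that define its sets. Concretely, I would let the index $i$ encode a pair consisting of a string $b\in\{0,1\}^n$ and an integer $s<\bar s$; there are $2^n\bar s$ such pairs, and $i$ can be stored in $n+\lceil\log\bar s\rceil$ bits, which fits in $\poly(n)+O(\bar s)$ space. On input $i$ the algorithm first decodes $(b,s)$ and verifies that $|b|=n$ and $s<\bar s$, declaring $i$ out of range otherwise. The $i$th set is then $S_{b,s}=\{x\in\{0,1\}^n\mid \KS^s(x|b)<k\}$, and it remains to output its $j$th element in, say, lexicographic order.

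The key subroutine is a membership test deciding, for a given $x\in\{0,1\}^n$, whether $\KS^s(x|b)<k$. I would implement it by running through all programs $p$ with $|p|<k$ (a counter of $k\le n$ bits) and simulating $\U(p,b)$ under the space bound $s$; the string $x$ belongs to $S_{b,s}$ iff some such simulation halts with output $x$ while never using more than $s$ cells. Since $k\le n$ and $x$ has length $n$, so that the output can be compared symbol by symbol, the only non-routine ingredient is detecting non-halting computations. Here I would use the standard configuration-counting bound: a computation of $\U$ that stays within space $s$ on inputs of length $O(n)$ passes through at most $2^{O(s)}\poly(n)$ distinct configurations, hence any halting such computation halts within that many steps. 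I therefore equip the simulation with a step counter of $O(s)+O(\log n)$ bits and abort (treating $p$ as non-productive) once the bound is exceeded. The whole membership test thus runs in space $O(s)+\poly(n)$.

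Finally, to produce the $j$th element I would scan $x$ over $\{0,1\}^n$ in lexicographic order, apply the membership test to each candidate, and maintain a counter of how many members have been seen so far; the first $x$ for which this counter reaches $j$ is the output, and if the scan exhausts $\{0,1\}^n$ with fewer than $j$ members found, $j$ is declared out of range. Producing the set element by element via membership tests, rather than by listing program outputs directly, is what keeps both the ordering canonical and the space small: we never store the set $S_{b,s}$ (which may have up to $2^k$ elements), only the current $x$, the found-counter, and the reusable workspace of the membership test. Summing the contributions — $O(n)$ for $b$ and $x$, the counters and program index, and $O(s)+\poly(n)$ for the simulation — and using $s<\bar s$, the total space is $O(\bar s)+\poly(n)$, as required.

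I expect the non-halting detection to be the only conceptually delicate step; the remaining space bookkeeping is routine. A secondary point worth stating carefully is that the same abstract set $S_{b,s}$ may arise from several indices $(b,s)$, but this causes no difficulty, since enumerability only asks that each index yield its associated set and not that the sets be distinct.
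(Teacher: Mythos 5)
Your proof is correct, and its computational core is exactly the paper's: simulate every program of length less than $k$ on $b$ within space $s$, using a step counter of $O(s)+O(\log n)$ bits to cut off computations that exceed the $2^{O(s)}\poly(n)$ configuration bound. (Your choice of a lexicographic membership scan instead of the paper's direct listing of program outputs with de-duplication is immaterial; the paper itself notes that recognition and enumeration are interchangeable under polynomial-space overhead.) Where you genuinely diverge is the indexing of the system. You index by raw pairs $(b,s)$ with $s<\bar s$ and dismiss the resulting multiplicity as harmless; the paper explicitly rejects this indexing --- not because of duplicate sets, but because the number of indexes, $2^n\bar s$, can exceed $2^{O(n)}$ when $\bar s$ is super-exponential in $n$, while a relevant system is by definition allowed only $2^{O(n)}$ sets. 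To keep the enumeration consistent with that definition, the paper indexes a set by $b$ together with the ordinal number of a \emph{pivotal} space bound, i.e.\ an $s$ with $\{x\mid \KS^s(x|b)<k\}\ne\{x\mid \KS^{s-1}(x|b)<k\}$; there are fewer than $2^k$ pivotal bounds for each $b$, each recognizable in space $O(s)+\poly(n)$, so the index set has size at most $2^{n+k}=2^{O(n)}$. That said, your simpler indexing does yield a surjection onto $\mathcal{S}_{k,\bar s}$ within the stated space, and nothing downstream in the paper breaks: the lemmas on enumerating congested vertices and on finding a good seed use only the space bound and the fact that every relevant set is reachable from some storable index, and the main theorem's proof in fact specifies $S$ by $b$ and a power-of-two rounding of $s$, exactly as you do. So your route buys simplicity and suffices for the applications, while the paper's pivotal-limit step buys an index set whose size matches the $2^{O(n)}$ cap built into the notion of a relevant system --- a point your closing remark brushes past rather than resolves.
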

\begin{proof}
Assume firstly that a set $S$ is given (by specifying $b$ and $s<\bar s$) and show how to enumerate it. Look through all programs shorter than $k$ and launch them on $b$ limiting the space to $s$ and counting the number of steps. If this number exceeds $c^s$ (for some constant $c$ depending only on the computational model) then the current program has looped. If the looping is detected or if the program exceeds the space limit it is terminated and the next one is launched. Otherwise, if the program produces an output, then a check whether it has not been produced by any previous program is performed. This check proceeds as follows: store the result, repeat the same procedure for all previous programs and compare their results with the stored one. If no result coincides then include the stored result in the enumeration, otherwise skip it and in both cases launch the next program. Emulation of a program requires $O(s)$ space and no two emulations should run in parallel. All intermediate results are polynomial in $n$, so a total space limit $O(\bar s)+\poly(n)$ holds.

Specifying a set $S$ by $b$ and $s$ is not reasonable because a lot of values of $s$ may lead to the same set. (And if $\bar{s}=2^{\omega(n)}$ then the number of possible indexes $(b,\,s)$ exceeds the limit $2^{O(n)}$ on the size of $\mathcal{S}$). Instead, call a limit $s$ pivotal if $\{x\mid \KS^s(x|b)<k\}\ne\{x\mid \KS^{s-1}(x|b)<k\}$ and consider only pivotal limits in the definition of $\mathcal{S}_{k,\bar s}$. Clearly, the latter modification of definition does not affect the system itself. The advantage is that there are less than $2^k$ pivotal limits, and the $i$th pivotal limit $s_i$ may be found algorithmically in space $O(s_i)+\poly(n)$. Indeed, it is sufficient to construct an algorithm recognizing whether a given limit is pivotal, and the latter is done by a procedure similar to one described in the first paragraph: try all possible programs in space $s$, and in case they produce an output, check whether it is produced by any program in space $s-1$. If at least one result is new then $s$ is pivotal, otherwise it is not. Putting all together, a set $S$ in $\mathcal{S}_{k,\bar s}$ is defined by a word $b$ and the ordinal number $i$ of a pivotal space limit $s_i$. Knowing these parameters, one may enumerate it in space $O(\bar s)+\poly(n)$. Only $O(n)$ additional space is needed to look over all possible values of $b$ and $i$, so the declared bound is meeted.
\end{proof}

The next lemma indicates that if a system $\mathcal{S}$ is enumerable in small space, then the same holds for the system of congested subsets of its members. We call a bipartite graph \emph{computable} in space $q$ if there exists an algorithm working in space $q$ that receives an index of a left-part vertex and an index of an incident edge and outputs the right-part vertex this edge directs to.
\begin{lemma}\label{congested-enum}
Let $\mathcal{S}$ be a system of relevant sets enumerable in space $s$, $G$ be a bipartite graph computable in space $q$ and $\alpha>1$ be a (rational) number. Then the system $\Con_\alpha\mathcal{S}=\{T\mid\exists S\in\mathcal{S}\ \text{for\ which\ }T\text{\ is\ the\ set\ of $\alpha$-congested\ vertices}\}$ is computable in space $O(\max\{s,q\})+\poly(n)$. Moreover, the iteration $(\Con_\alpha)^r\mathcal{S}$ is also computable in space $O(\max\{s,q\})+\poly(n)$ with constants in $O$- and $\poly$- notations not depending on $r$, but possibly depending on $\alpha$.
\end{lemma}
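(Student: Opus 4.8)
The plan is to pass to the equivalent ``recognizer'' formulation (membership testing and enumeration coincide up to an additive $\poly(n)$ of auxiliary space, as remarked after the definition of enumerability) and to build a small-space recognizer for the sets of $\Con_\alpha\mathcal{S}$. First I would fix the index structure: a set $T\in\Con_\alpha\mathcal{S}$ is indexed by the same index $i$ that the enumerator of $\mathcal{S}$ gives to a set $S$ producing it. Since $T\subseteq S$, every $T$ still has fewer than $2^k$ vertices and the number of indices is unchanged, so $\Con_\alpha\mathcal{S}$ is again relevant; this is what lets the construction be iterated. To test whether a left-part vertex $x$ is $\alpha$-congested for the $i$th set $S$, I check $x\in S$ and then run over the $D$ edges leaving $x$ with the graph algorithm, verifying that each endpoint $y$ lies in the $\alpha$-clot of $S$. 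Membership of $y$ in the clot is decided by a single counting pass: enumerate $S$ (space $s$) and, for each $x'\in S$, add up via the graph algorithm (space $q$) the number of edges from $x'$ landing on $y$, finally comparing the accumulated count with the threshold $\alpha DK/M$. Because $\alpha$ is rational and $D,K,M$ are powers of two, this is a comparison of two $\poly(n)$-bit integers.

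The single-step space bound then follows from the observation that the two appearances of the $S$-enumeration --- once to furnish the candidate $x$, once inside the clot count --- are never live at the same time: I store the current $x$ together with the one-bit outcome of $x\in S$ in $O(n)$ cells, release the enumerator's workspace, and reuse it for the counting pass. The graph subroutine is likewise invoked only between, never during, a membership test, so it may share the same cells. Thus the only large resources that must coexist are one membership test or one graph computation, namely $\max\{s,q\}$, together with the loop counters over left vertices and edges, the stored target $y$, and the edge accumulator, all $\poly(n)$. This gives computability of $\Con_\alpha\mathcal{S}$ in $O(\max\{s,q\})+\poly(n)$, with the coefficient of $\max\{s,q\}$ equal to a fixed reuse constant rather than to anything growing.

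For the iteration I set $S_0=S\supseteq S_1\supseteq\cdots\supseteq S_r$ with $S_j=\Con_\alpha S_{j-1}$ and give a recursive recognizer $M_j$ for ``$x\in S_j$'': $M_0$ is the recognizer of $\mathcal{S}$, and $M_j(x)$ tests $M_{j-1}(x)$ and then, for each neighbour $y$ of $x$, recomputes the count of edges from $S_{j-1}$ into $y$ by looping $x'$ over the left part and invoking $M_{j-1}(x')$ inside the accumulation, comparing against the threshold exactly as before. The whole of $(\Con_\alpha)^r\mathcal{S}$ is obtained by running $M_r$ over all left vertices. The decisive point is that along any execution only one call path is live, so the calls to the base recognizer $M_0$ and to the graph algorithm are strictly sequential and may all reuse the same $\max\{s,q\}$ cells --- no two are ever simultaneously in progress. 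Hence level $j$ costs $\max\{s_{j-1},q\}$ for its single active subcall plus $\poly(n)$ of its own bookkeeping, so the recursion composes \emph{additively}, $s_j=\max\{s_{j-1},q\}+\poly(n)$, and never multiplicatively.

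The step I expect to be the crux is precisely this last point. A black-box composition of the single-step lemma would read $s_j=C\cdot\max\{s_{j-1},q\}+\poly(n)$ with $C>1$ and unfold to the catastrophic $C^{r}\max\{s,q\}$; the entire content of the ``moreover'' clause is that this blowup does not happen. I would therefore open up the single-step construction rather than invoke it as a black box, explicitly identifying the expensive base-enumeration and graph subroutines used at level $j$ with the single currently-active subroutine one level down and letting them reuse its storage. This keeps the coefficient of $\max\{s,q\}$ a fixed constant independent of $r$, leaving only the per-level $\poly(n)$ of counters, so the final bound is of the stated form $O(\max\{s,q\})+\poly(n)$ with constants independent of $r$. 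I would close by noting that $S_r\subseteq S$ preserves relevance of the iterated system, completing the induction.
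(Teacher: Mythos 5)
Your single-step argument coincides with the paper's: reduce enumeration to recognition, test whether $x$ is $\alpha$-congested by looping over its $D$ neighbours and deciding clot membership of each neighbour $y$ by one counting pass over the enumeration of $S$, and note that the space-$s$ and space-$q$ subroutines are never live simultaneously, so their workspace can be reused and the coefficient in front of $\max\{s,q\}$ is a fixed constant. The paper makes exactly this last point too, remarking that the $O$ in $O(\max\{s,q\})$ comes only from a possible change of computational model, which is what licenses iteration without multiplicative blowup.

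The gap is in the ``moreover'' part. Your recurrence $s_j=\max\{s_{j-1},q\}+\poly(n)$ unrolls to $s_r=\max\{s,q\}+r\cdot\poly(n)$: every one of the $r$ suspended recursion frames must retain its own loop counters, target $y$ and accumulator, so the $\poly(n)$ overhead is per level and accumulates. Since the lemma requires constants that do \emph{not} depend on $r$, and $r$ is arbitrary, this is not yet of the form $O(\max\{s,q\})+\poly(n)$; you explicitly acknowledge the ``per-level $\poly(n)$ of counters'' and then assert $r$-independence, which is a non sequitur. The missing ingredient --- and this is precisely how the paper closes the hole --- is a bound on the \emph{effective} depth of the iteration: by a counting argument the fraction of $\alpha$-congested vertices in a set is at most $1/\alpha$, so the sets in $(\Con_\alpha)^j\mathcal{S}$ shrink geometrically and become empty after at most $\log_\alpha 2^n=O(n)$ iterations, beyond which the iteration is trivial. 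Hence the recursion depth that ever needs to be realized is $O(n)$, the accumulated bookkeeping is $O(n)\cdot\poly(n)=\poly(n)$, and only then do the constants become independent of $r$. Without some such termination or stabilization argument, your construction --- correct as far as it goes --- proves a bound that genuinely depends on $r$.
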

\begin{proof}
Since for space complexity enumeration and recognition are equivalent, it is sufficient to recognize that a vertex $x$ is $\alpha$-congested. The recognizing algorithm works as follows: having a set $S$ and a vertex $x\in S$ fixed, search through all neighbours of $x$ (this requires space $O(q)+\poly(n)$) and for each neighbour check whether it lies in the  $\alpha$-clot for $S$. If all neighbours do then $x$ is congested, otherwise it is not. Having a neighbour $y$ fixed, the check is performed in the following way: enumerate all members of $S$ (using $O(s)+\poly(n)$ space), for each member search through all its neighbours (using $O(q)+\poly(n)$ space) and count the number of these neighbours coinciding with $y$. Finally, compare this number with the threshold $\alpha DK/M$. Note that no two computations requiring space $s$ or $q$ run in parallel and all intermediate results need only polynomial space, so the total space requirement is $O(\max\{s,q\})+\poly(n)$, as claimed. Note also that $O$- notation is used only due to the possibility of computational model change, not because of the necessity of looping control. If a computational model is fixed then the required space is just $\max\{s,q\}+\poly(n)$.

The last observation is crucial for the ``moreover'' part of lemma. Indeed, by a simple counting argument the fraction of $\alpha$-congested vertices in $S$ is at most $1/\alpha$. That is why after at most $\log_{\alpha}2^n=O(n)$ iterations the set of congested variables becomes empty. Each iteration adds $\poly(n)$ to the space requirement, so the overall demand is still $\max\{s,q\}+\poly(n)$ (with greater polynomial), as claimed.
\end{proof}

\subsection{Derandomization}\label{muchnik-derandom}
In this section we show that, firstly, the low-congesting property may be (approximately) recognized by $2^{O(n)}$-sized constant-depth circuits, secondly, that there are low-congesting graphs in the output of Nisan-Wigderson pseudo-random generator and, thirdly, that one can recognize in polynomial space whether the NW-generator produces a low-congesting graph on a given seed. Put together, the last two lemmas provide a polynomial-space algorithm outputting a seed on which the NW-generator produces a low-congested graph.

Let us encode a bipartite graph with parameters ($n$, $m$, $d$) by a list of edges. The length of this list is $2^n2^dm$: for each of $2^n$ left-part vertices we specify $2^d$ neighbours, each being $m$-bit long.
\begin{lemma}\label{muchnik-circuit}
Let $\mathcal{G}$ be the set of all bipartite graphs with parameters ($n$, $m$, $d$) encoded as described above. Let $k$ be an integer such that $1<k<n$, and let $\eps$ be a real positive number. Then there exists a circuit $C$ of size $2^{O(n)}$ and constant depth defined on $\mathcal{G}$, such that:
\begin{itemize}
\item If $G$ is a ($k$, $\eps$)-extractor then $C(G)$=1;
\item If $C(G)=1$ then $G$ is a $(2.01$, $2.01\eps)$-low-congesting graph for $\mathcal{S}_k$ from lemma~\ref{relevant-sets}.
\end{itemize}
\end{lemma}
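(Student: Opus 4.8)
The plan is to build $C$ as an \emph{approximate} tester for the low-congesting property, squeezed between the extractor consequence of Lemma~\ref{fortnow-new} and the target parameters $(2.01,2.01\eps)$. Since $C$ is a non-uniform circuit, I would hardwire the whole relevant system $\mathcal{S}_k$ into it: by Lemma~\ref{relevant-sets} this system consists of only $2^{O(n)}$ subsets of $\{0,1\}^n$, so each membership predicate ``$v\in S$'' costs only hardwired constants and the total description stays within $2^{O(n)}$ bits. (No computability of $\mathcal{S}_k$ is needed here, only its existence and size; computability will matter only for the later lemmas.) For each hardwired $S$ the circuit proceeds in three stages: a \emph{clot stage} that, for every right vertex $y$, outputs one bit telling whether $y$ lies in the (approximate) $2$-clot of $S$; a \emph{congestion stage} that, for every $x\in S$, takes the AND over the $D$ neighbours of $x$ of their clot bits; and a \emph{counting stage} that approximately checks that the number of resulting congested vertices is below $2\eps K$. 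The output of $C$ is the AND over all $S\in\mathcal{S}_k$ of the counting-stage bits. Each edge indicator ``the $i$-th neighbour of $v$ equals $y$'' is an equality test between an $m$-bit block of the input encoding and the hardwired string $y$, hence has constant depth, and each clot bit is shared by fan-out among all $x$ pointing to $y$.

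The whole design hinges on two approximate-threshold gadgets whose error windows are exactly the ``$.01$'' slack in the statement. The clot gadget for a pair $(S,y)$ reads the at most $DK$ indicators of edges leaving $S$ and landing at $y$, and must output $1$ when this indegree exceeds $2.01\,DK/M$ and $0$ when it is below $2\,DK/M$. The counting gadget reads the at most $K$ congestion bits of $S$, and must output ``small'' when their number is below $2\eps K$ and ``large'' when it is above $2.01\eps K$. I would instantiate both from Theorem~\ref{ajtai}, fixing midpoints and tolerances so that the admissible ambiguity zones are precisely $[2\,DK/M,\,2.01\,DK/M]$ and $[2\eps K,\,2.01\eps K]$.

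Correctness then drops out of Lemma~\ref{fortnow-new}. For the first bullet, assume $G$ is a $(k,\eps)$-extractor. The clot gadget fires only above $2\,DK/M$, so every vertex it declares congested has all neighbours of indegree exceeding $2\,DK/M$, i.e.\ is genuinely $2$-congested; by Lemma~\ref{fortnow-new} with $\alpha=2$ there are fewer than $2\eps K$ such vertices in any set of size $K$, so the counting gadget reports ``small'' and $C(G)=1$. For the second bullet, suppose $C(G)=1$. Any genuinely $2.01$-congested vertex of $S$ has all neighbours of indegree exceeding $2.01\,DK/M$, so the clot gadget surely fires on each of them and the vertex is counted; since the counting gadget reported ``small'', its true count is below $2.01\eps K$, whence there are fewer than $2.01\eps K$ genuine $2.01$-congested vertices in every $S\in\mathcal{S}_k$. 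This is exactly the $(2.01,2.01\eps)$-low-congesting property for $\mathcal{S}_k$.

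The main obstacle is the clot gadget. Its threshold $2\,DK/M$ is an \emph{exponentially small} fraction of its $2^{O(n)}$-bit input, whereas Theorem~\ref{ajtai} supplies only a \emph{constant} additive tolerance around a \emph{constant} threshold, and the gap I actually need, $2$ versus $2.01$, is \emph{multiplicative}. I would bridge this with the standard reduction from a multiplicative gap at an arbitrary threshold to a constant additive gap: hardwire a family of $\poly(n)$ subsamples of the edge positions, each position retained with probability about $M/(DK)$, compute within each subsample the OR of its indicators (constant depth), and feed the vector of subsample-hit bits to an Ajtai circuit at the constant threshold $1-e^{-2}$. An indegree $c$ makes a subsample a hit with probability about $1-e^{-cM/(DK)}$, so $c=2\,DK/M$ and $c=2.01\,DK/M$ are separated by the constant additive margin $e^{-2}-e^{-2.01}$ in the hit fraction, and concentration over $\poly(n)$ subsamples makes the estimate reliable. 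The same device frees the counting stage from needing $\eps$ to be constant. What remains is bookkeeping: each stage has constant depth and they compose a constant number of times, and multiplying the $2^{O(n)}$ choices of $S$, the $2^{m}\le 2^{O(n)}$ choices of $y$ (using $m=O(n)$, $d=O(n)$, as forced by the $2^{O(n)}$ encoding length), and the $2^{O(n)}$-sized gadgets keeps the total size at $2^{O(n)}$.
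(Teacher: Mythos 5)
Your construction follows the same architecture as the paper's own proof: hardwire the relevant system $\mathcal{S}_k$ (this is exactly how the paper uses non-uniformity), test equality of $m$-bit segments by constant-depth comparators, detect clot membership by an approximate threshold on indegrees, take an AND over the neighbours of each $x\in S$ to get a congestion bit, approximately count these bits against the thresholds $2\eps K$ versus $2.01\eps K$, and finally take a conjunction over all $S\in\mathcal{S}_k$; correctness is the same sandwich between genuine $2$-congestion and genuine $2.01$-congestion via lemma~\ref{fortnow-new} with $\alpha=2$. The one place where you genuinely depart from the paper is the clot gadget, and your departure is well motivated. The paper implements this gadget by a bare appeal to theorem~\ref{ajtai}: it asks for a circuit on $|S|D-1$ bits that outputs $1$ above $2.01DK/M$ ones and $0$ below $2DK/M$ ones, and asserts that theorem~\ref{ajtai} supplies it. As you observe, this does not literally follow: with $m=k$ the two thresholds are exponentially small fractions of the input length and their additive gap is roughly $0.01/M$, whereas theorem~\ref{ajtai} as stated handles only a constant threshold with constant tolerance. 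The paper is implicitly relying on the stronger approximate-counting results of Ajtai's paper (which do handle multiplicative gaps at small counts); your subsampling reduction --- OR over hardwired subsamples of density about $M/(DK)$, then a constant-threshold circuit around $1-e^{-2}$ --- repairs the step using only the theorem as quoted, so your proof is in this respect more self-contained relative to the stated toolkit.

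One detail of your patch is elided and should be spelled out: since the subsamples are hardwired, they must work for \emph{every} input graph simultaneously, so "concentration over $\poly(n)$ subsamples" is not by itself a proof. The standard finish works here: the hit events depend only on the set of edge positions pointing at $y$, and they are monotone in that set, so it suffices to take a union bound over position-sets of the two critical sizes $\lfloor 2DK/M\rfloor$ and $\lceil 2.01DK/M\rceil$ (together with the $2^{O(n)}$ choices of $S$ and $y$); there are at most $2^{\poly(n)}$ such sets because $DK/M=D=\poly(n)$, so a Chernoff bound over a sufficiently large polynomial number of independent subsamples leaves a nonzero probability that one fixed family of subsamples is good for all of them. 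With that argument added, your proof is complete.
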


\begin{proof}
We build a non-uniform circuit, so we may assume that $\mathcal{S}_k$ is given. We construct a single circuit approximately counting the number of congested vertices in a given set, then replicate it for each relevant set and take conjunction. Since there are less than $2^{n+k}$ relevant sets, this operation keeps the size of circuit being $2^{O(n)}$. We proceed by constructing a circuit for a given set $S$. A sketch of this circuit is presented on Fig.~\ref{circuit}.

The size of the input is $|S|\cdot2^d m$. We think of it as of being divided into $|S|$ blocks of $2^d$ segments of length $m$. We index all blocks by elements $x\in S$ and index all segments of the block $x$ by vertices $y$ incident to $x$. It is easy to see that there is a constant-depth circuit that compares two segments (that is, has $2m$ inputs and outputs 1 if and only if the first half of inputs coincides with the second half). On the first stage we apply this circuit to every pair of segments, obtaining a long 0-1-sequence. On the second stage we employ a counting circuit with $|S|D-1$ arguments that is guaranteed to output 1 if more than $2.01DK/M$ of its arguments are 1's and to output 0 if the number of 1's is less than $2DK/M$. By theorem~\ref{ajtai} there exists such a circuit of polynomial (in the number of arguments) size and constant depth. For all segments $y$ a copy of this circuit is applied to the results of the comparison of $y$ to all other segments. If $y$ lies in the $2.01$-clot then the respective copy outputs $1$, and if it outputs $1$, then $y$ lies in $2$-clot. On the third stage we take a conjuction of all second-stage results for the segments lying in the same block $x$. If this conjunction equals 1 then all images of $x$ lie in $2$-clot, that is, $x$ is $2$-congested. Conversely, if $x$ is $2.01$-congested then the conjunction equals 1. Finally, we utilize another counting circuit with $|S|$ inputs that outputs 0 if \textit{more} than $2.01\eps K$ of its inputs are 1's and outputs 1 if \textit{less} than $2\eps K$ of its inputs are 1's. This circuit is applied to all outputs of the third stage. If the final result is 1 then less than $2.01\eps K$ elements of $S$ are $2.01$-congested; and if less than $2\eps K$ elements of $S$ are $2$-congested then the final result is 1.
\begin{figure}[t]\label{circuit}
\hbox{\hskip -13mm \includegraphics{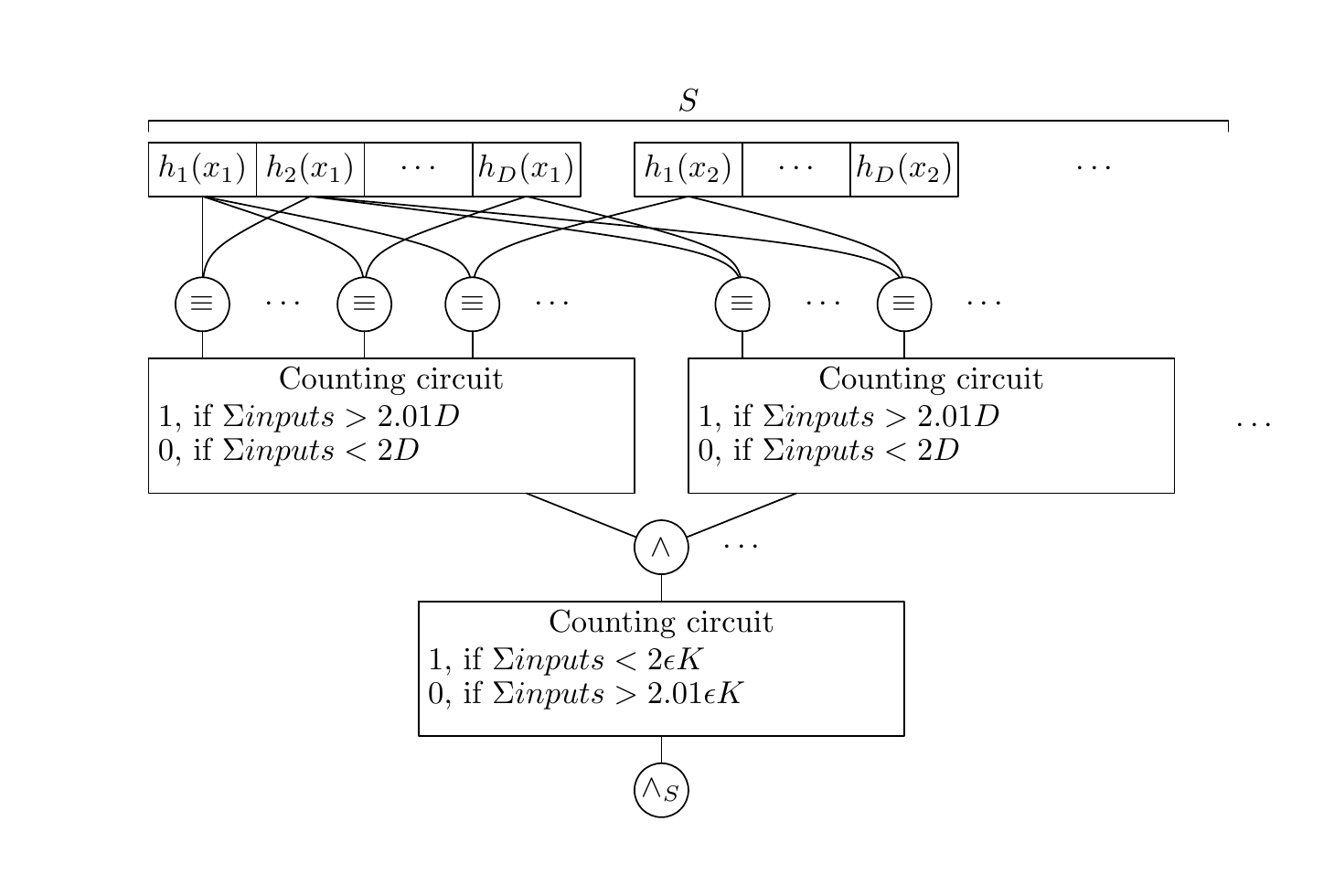}}
\caption{A sketch of the circuit for fixed $S$.}\end{figure}

The last claim holds for any relevant $S$. On the very last stage we take a conjunction of results for all $S$. If this conjunction is 1 then less than $2.01\eps K$ elements in each $S$ are $2.01$-congested, and if $G$ is a ($k$, $\eps$)-extractor then by lemma~\ref{fortnow-new} less than $2\eps K$ elements in each $S$ are $2$-congested and hence this conjunction equals 1. 

Since there are at most $2^{O(n)}$ gates on every stage and each stage has constant depth, the overall circuit has also $2^{O(n)}$ gates and constant depth, as claimed.
\end{proof}

\begin{lemma}\label{lowcon-derandom}
Let $n$, $m=k$, $d=O(\log n)$ and $\eps$ be such parameters that a random bipartite graph with parameters ($n$, $d$, $m$) is a ($k$, $\eps$)-extractor with constant probability $p>0$. Let $q$ be the depth of the circuit from the previous lemma. Let $NW\colon\{0.1\}^l\to\{0,1\}^N$, where $l=O(\log^{2q+6})$ and $N=2^n2^d m$, be the Nisan-Wigderson generator from corollary~\ref{nw-modified}. Then $\prob_u\{C(NW(u))=1\}>\frac p2$ for sufficiently large $n$, where $C$ is the circuit from the previous lemma.
\end{lemma}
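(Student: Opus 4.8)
The plan is to read this statement as a direct instance of the main principle (Lemma~\ref{mainprinciple}), with the circuit $C$ of Lemma~\ref{muchnik-circuit} playing the role of the constant-depth tester and the extractor probability $p$ playing the role of the fraction $\alpha$. First I would invoke the encoding fixed just before Lemma~\ref{muchnik-circuit}: every string of length $N=2^n2^dm$ is identified with a bipartite graph, and a uniformly random such string is exactly a uniformly random graph in the model under which, by hypothesis, a $(k,\eps)$-extractor occurs with probability $p$. Hence, writing $y$ for a uniform string of length $N$, the set of extractors has measure at least $p$; by the first bullet of Lemma~\ref{muchnik-circuit}, $C$ outputs $1$ on every extractor, so $\prob_y\{C(y)=1\}\ge p$.

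Next I would check that the Nisan--Wigderson generator of Corollary~\ref{nw-modified} applies to $C$. The circuit $C$ has size $2^{O(n)}$ and constant depth $q$, so it lies in the class the generator fools. The output length is $N=2^n2^dm$; since $d=O(\log n)$ and $m=k\le n$, we have $N=2^{n+O(\log n)}\poly(n)=2^{O(n)}$, matching the output length required by the corollary, while the seed length $l=O(\log^{2q+6}N)=\poly(n)$ is exactly the rescaled seed length for depth-$q$ circuits.

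Then the conclusion follows from a single use of indistinguishability. Applying Corollary~\ref{nw-modified} to the family $C$ with, say, the constant $c=1$, for all sufficiently large $n$ we obtain
$$\left|\prob_u\{C(NW(u))=1\}-\prob_y\{C(y)=1\}\right|<2^{-n}.$$
Since $p>0$ is a fixed constant, $2^{-n}<p/2$ once $n$ is large, so
$$\prob_u\{C(NW(u))=1\}>\prob_y\{C(y)=1\}-2^{-n}\ge p-2^{-n}>p-\tfrac p2=\tfrac p2,$$
which is the claim.

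I do not expect a genuine obstacle: all the combinatorial effort was already spent in Lemma~\ref{muchnik-circuit}, and the present statement is merely its derandomization counterpart. The only points requiring care are bookkeeping of parameters --- confirming $N=2^{O(n)}$ and feeding the constant depth $q$ consistently into Corollary~\ref{nw-modified} so the seed length stays polynomial --- together with the observation that the uniform distribution on the edge-list encoding \emph{is} the random-graph distribution for which the extractor probability $p$ is stated. Obtaining the strict inequality $>p/2$ rather than the $\ge p/2$ of the main principle is immediate from the strict gap $2^{-n}$.
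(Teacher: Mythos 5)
Your proof is correct and follows essentially the same route as the paper: the paper's own proof also observes that extractors force $C(G)=1$ (so the acceptance probability on truly random strings is at least $p$) and then invokes Lemma~\ref{mainprinciple}, which is nothing but the indistinguishability statement of Corollary~\ref{nw-modified} that you apply directly. Your only deviation is that you inline the main principle instead of citing it, which incidentally makes the strict inequality $>p/2$ cleaner than the paper's appeal to a lemma stated with ``at least $\alpha/2$.''
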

\begin{proof}
This is a straightforward application of lemma~\ref{mainprinciple}. By the previous lemma if a graph $G$ is an extractor then $C(G)=1$, so $\prob_G\{C(G)=1\}\ge p$. Since $C$ is a constant-depth circuit, the property $C(G)=1$ is tautologically testable by a constant-depth circuit. By lemma~\ref{mainprinciple} $\prob_u\{C(NW(u))=1\}>\frac p2$, q.e.d.
\end{proof}

Consider the following problem $\mathcal{R}$: find a string $u\in\{0,1\}^l$ such that the graph $NW(u)$ is ($2.01$, $2.01\eps K$)-low-congesting with respect to $\mathcal{S}_{k,\,s}$.

\begin{lemma}\label{finding-good-seed}
The problem $\mathcal{R}$ is solvable in space $O(s)+\poly(n)$.
\end{lemma}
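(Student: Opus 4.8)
The plan is to first certify that $\mathcal{R}$ has a solution and then to find one by exhaustive search over all seeds, verifying the low-congesting property \emph{directly} in bounded space rather than by evaluating the exponentially large circuit $C$. The one idea that makes this possible is to notice that existence is proved against the full system $\mathcal{S}_k$, while the bounded-space search only has to inspect the subsystem $\mathcal{S}_{k,s}\subseteq\mathcal{S}_k$.

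For existence, Lemma~\ref{lowcon-derandom} gives $\prob_u\{C(NW(u))=1\}>p/2>0$, so some seed $u^{*}$ satisfies $C(NW(u^{*}))=1$. By the second bullet of Lemma~\ref{muchnik-circuit} the graph $NW(u^{*})$ is then $(2.01,2.01\eps)$-low-congesting for $\mathcal{S}_k$. Since $\mathcal{S}_{k,s}$ is the same system with the extra restriction on the space parameter, $\mathcal{S}_{k,s}\subseteq\mathcal{S}_k$, and the low-congesting property of a system is inherited by every subsystem (it is a bound that must hold for \emph{each} relevant set, and shrinking the family only removes constraints). Hence $NW(u^{*})$ is $(2.01,2.01\eps)$-low-congesting for $\mathcal{S}_{k,s}$, and $u^{*}$ solves $\mathcal{R}$.

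Next I would describe the search: loop over all seeds $u\in\{0,1\}^{l}$ in lexicographic order, and for each $u$ test whether $NW(u)$ is $(2.01,2.01\eps)$-low-congesting for $\mathcal{S}_{k,s}$, outputting the first $u$ that passes (existence guarantees termination). The seed length is $l=\poly(n)$, so the running index is stored in $\poly(n)$ space. The essential discipline is that neither the $2^{O(n)}$-bit graph nor the $2^{O(n)}$-gate circuit is ever written down; everything is computed on demand. A single bit of the edge-list encoding of $NW(u)$ is produced by evaluating the generator at the corresponding position, which by the computability clause of Corollary~\ref{nw-modified} costs $\poly(n)$ space; thus $NW(u)$ is computable, in the sense required by Lemma~\ref{congested-enum}, in space $q=\poly(n)$. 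By Lemma~\ref{relevant-enum} the system $\mathcal{S}_{k,s}$ is enumerable in space $O(s)+\poly(n)$. Feeding these two facts into Lemma~\ref{congested-enum} with $\alpha=2.01$ shows that for any relevant $S$ the set of its $2.01$-congested vertices (with respect to $NW(u)$) is recognizable in space $O(\max\{O(s)+\poly(n),\poly(n)\})+\poly(n)=O(s)+\poly(n)$. The per-seed test then iterates over all relevant $S\in\mathcal{S}_{k,s}$, each indexed by a pair $(b,i)$ of total length $O(n)$ as in Lemma~\ref{relevant-enum}; for every $S$ it sweeps the left-part vertices of $S$, counts the $2.01$-congested ones with a $\poly(n)$-bit counter via the recognition procedure above, and checks that the count stays below $2.01\eps K$. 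If some $S$ fails, $u$ is rejected and the next seed is tried.

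The step I expect to be the main obstacle is the additivity bookkeeping for the space bound. Because the exponential-size objects forbid any materialization, one must organize the nested loops (over seeds, over relevant sets, over vertices, over neighbours, over members of $S$) so that every level either fits in $\poly(n)$ or reuses the single $O(s)$ workspace allotted to enumerating $\mathcal{S}_{k,s}$. As in Lemmas~\ref{relevant-enum} and~\ref{congested-enum}, no two $\Theta(s)$-space subroutines run in parallel, so the contributions compose additively and the total stays $O(s)+\poly(n)$. Note that the search uses only \emph{exact} counting and never evaluates $C$; the circuit and its approximate thresholds serve solely to certify, through Lemma~\ref{lowcon-derandom}, that the search space is non-empty.
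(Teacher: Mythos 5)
Your proposal is correct and follows essentially the same route as the paper: existence of a good seed comes from Lemma~\ref{lowcon-derandom}, the search is replaced by a recognition problem, and recognition is done in space $O(s)+\poly(n)$ by combining Corollary~\ref{nw-modified} (the graph is computable in polynomial space), Lemma~\ref{relevant-enum} (enumerating $\mathcal{S}_{k,s}$), and Lemma~\ref{congested-enum} (enumerating the congested sets), never touching the exponential-size circuit during the search. Your explicit remark that low-congestion for $\mathcal{S}_k$ is inherited by the subsystem $\mathcal{S}_{k,s}$ is a point the paper leaves implicit, but it is a clarification rather than a different argument.
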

\begin{proof}
The existence of a solution follows from the previous lemma. Since we care only about the space limit, the search problem may be replaced by the corresponging recognition problem. The space bound for the latter one arises from corollary~\ref{nw-modified}, lemma~\ref{relevant-enum} and lemma~\ref{congested-enum}. Indeed, by corollary~\ref{nw-modified} the graph $NW(u)$ is computable in polynomial space, and by lemma~\ref{relevant-enum} the system $\mathcal{S}_{k,\,s}$ is enumerable in space $O(s)+\poly(n)$. Hence by lemma~\ref{congested-enum} the system $\Con_{2.01}\mathcal{S}_{k,\,s}$ is also enumerable in space $O(s)+\poly(n)$, therefore one may easily check whether each set in $\Con_{2.01}\mathcal{S}_{k,\,s}$ contains less than $2.01\eps K$ elements, thus solving the recognition analogue of $\mathcal{R}$. Only polynomial extra space is added on the last step.
\end{proof}

\subsection{Proof of the theorem}\label{sect-proof}
Now we proceed by formulating and proving our version of Muchnik's theorem.

\begin{theorem}
Let $a$ and $b$ be binary strings of length less than $n$, and $s$ and $k$ be numbers such that $\KS^s(a|b)<k$. Then there exists a binary string $p$, such that
\begin{itemize}
\item[\textbullet] $\KS^{O(s)+\poly(n)}(a|p,b)=O(\log\log s+\log n)$;
\item[\textbullet] $\KS^s(p)\le k+O(\log n)$;
\item[\textbullet] $\KS^{O(s)+\poly(n)}(p|a)=O(\log\log s+\log n)$,
\end{itemize}
where all constants in $O$- and $\poly$-notations depend only on the choice of the optimal description method.
\end{theorem}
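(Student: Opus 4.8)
The plan is to take $p$ to be a fingerprint of $a$ read off the pseudo-random graph $G=NW(u)$ constructed in Sect.~\ref{muchnik-derandom}, and to keep the programs short by \emph{not} storing the seed $u$: whenever a program needs $G$ it will recompute $u$ inside the space budget $O(s)+\poly(n)$ via Lemma~\ref{finding-good-seed}. The three inequalities will then follow from the two structural features of $G$ --- small left-degree (bounding $\KS(p)$ and $\KS(p\mid a)$) and low congestion (bounding $\KS(a\mid p,b)$). Concretely, first I would fix parameters $n$, $m=k$, $d=O(\log n)$ and a constant $\eps$ with $2.01\eps<1$ for which a random graph with parameters $(n,k,d)$ is a $(k,\eps)$-extractor with constant probability, as guaranteed by the probabilistic bound recalled in Sect.~\ref{pre}. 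Then I would replace $s$ by $s'=2^{\lceil\log s\rceil}\in[s,2s)$: enlarging the space bound cannot increase complexity, so $\KS^{s'}(a\mid b)<k$ still holds, while $s'$ is named by the single integer $\lceil\log s\rceil$, i.e.\ by $O(\log\log s)$ bits --- this is the only source of the $\log\log s$ term. Finally I would apply Lemma~\ref{finding-good-seed}, but to the closure $\mathcal{S}^{\ast}=\bigcup_{r\ge0}\Con_{2.01}^{\,r}\mathcal{S}_{k,s'}$ instead of $\mathcal{S}_{k,s'}$; this system is still relevant and, by Lemmas~\ref{relevant-enum} and~\ref{congested-enum}, enumerable in space $O(s)+\poly(n)$, so a seed $u$ making $G=NW(u)$ be $(2.01,2.01\eps)$-low-congesting on every member of $\mathcal{S}^{\ast}$ is computable in space $O(s)+\poly(n)$.

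\textbf{Choosing $p$ by layering.} Since $a$ may itself be congested, I would peel the congested vertices off in layers. Set $S_0=\{x:\KS^{s'}(x\mid b)<k\}$ and $S_{r+1}=\Con_{2.01}S_r$. By the counting argument in Lemma~\ref{congested-enum} these sets shrink until $S_R=\emptyset$ for some $R=O(n)$; as $a\in S_0$, there is a largest $j<R$ with $a\in S_j$, and then $a$ is not $2.01$-congested in $S_j$, so some neighbour $p$ of $a$ avoids the $2.01$-clot of $S_j$. For this $p$ the clot definition gives at most $2.01\,DK/M=2.01\cdot2^d=O(\poly(n))$ neighbours inside $S_j$. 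I would output this $p$ (say, the first admissible neighbour of $a$) and record the layer index $j$.

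\textbf{Checking the three bounds.} The bound $\KS^{s'}(p)\le k+O(1)$ is immediate, since $p$ is literally a $k$-bit string. For $\KS(p\mid a)$ the program would store $\lceil\log s\rceil,k,n$ and the index of $p$ among the $2^d$ neighbours of $a$, recompute $u$ and hence $G$ in space $O(s)+\poly(n)$, and print that neighbour: length $d+O(\log n+\log\log s)=O(\log n+\log\log s)$, space $O(s)+\poly(n)$. For $\KS(a\mid p,b)$ the program would store $\lceil\log s\rceil,k,n,j$ and the ordinal of $a$ among the neighbours of $p$ lying in $S_j$; using $b$ it would recompute $G$, enumerate $S_0$ (Lemma~\ref{relevant-enum}), apply $\Con_{2.01}$ exactly $j$ times to reach $S_j$ (Lemma~\ref{congested-enum}), and select the stored neighbour of $p$ in $S_j$. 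Since $j=O(n)$ and the ordinal is at most $2.01\cdot2^d$, this again costs $O(\log n+\log\log s)$ bits and space $O(s)+\poly(n)$. All three programs refer to the same fixed $p$, so the inequalities hold simultaneously.

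\textbf{Expected main obstacle.} The crux is that the combinatorial object is regenerated rather than stored: a short program cannot contain the $\polylog n$-bit seed, so correctness rests entirely on Lemma~\ref{finding-good-seed} guaranteeing that the identical $G$ is reconstructible deterministically within space $O(s)+\poly(n)$. The second subtlety is that $a$ may be congested; the layering over $\Con_{2.01}$ cures this but forces the relevant system to be closed under $\Con_{2.01}$, which is precisely what the ``moreover'' clause of Lemma~\ref{congested-enum} supplies. Everything else is routine accounting of additive $O(\log n)$ and $O(\log\log s)$ terms.
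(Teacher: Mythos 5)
Your overall plan does match the paper's: $p$ is a neighbour of $a$ in a pseudo-random graph $G=NW(u)$, the seed is recomputed inside the space budget rather than stored, the $\log\log s$ term comes from rounding $s$ up to a power of two, and the congested case is handled by iterating the congestion operator. But the step where you depart from the paper contains a genuine gap, and it is the crux of the whole argument. You apply Lemma~\ref{finding-good-seed} to the closure $\mathcal{S}^{\ast}=\bigcup_{r\ge0}\Con_{2.01}^{\,r}\mathcal{S}_{k,s'}$, a system that is defined \emph{in terms of the graph $G$ being sought}: which vertices are $\alpha$-congested depends on the edges of $G$. The derandomization machinery (Lemmas~\ref{muchnik-circuit}, \ref{lowcon-derandom}, \ref{finding-good-seed}) works only for a relevant system that is fixed in advance, because the system is hard-wired, set by set, into the non-uniform constant-depth circuit of Lemma~\ref{muchnik-circuit}. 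A $G$-dependent system cannot be hard-wired, and a circuit that computed $\Con_{2.01}S$ from its input graph would have to count indegrees exactly, i.e.\ compute threshold functions, which constant-depth circuits cannot do; Theorem~\ref{ajtai} gives only approximate counting, which leaves the layer boundaries ill-defined. So nothing in the paper (and nothing in your proposal) shows that \emph{any} seed $u$ makes $NW(u)$ low-congesting for its own closure system, and your invocation of Lemma~\ref{finding-good-seed} is unsupported. This circularity is precisely why the paper re-runs the entire derandomization at every level: once $G=NW(u)$ is fixed, $\Con_{2.01}\mathcal{S}_{k,s}$ becomes a \emph{fixed} relevant system, a new circuit can be built for it, and a fresh graph $G_1=NW(u_1)$ can be certified against it, and so on.

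There is a second, independent failure: even granting a graph $G$ that is $(2.01,2.01\eps)$-low-congesting on every member of $\mathcal{S}^{\ast}$, your layering need not terminate. With a single graph and fixed parameters $m=k$, the clot threshold stays $2.01\,DK/M=2.01D$ at every level, and the low-congestion guarantee bounds the number of congested vertices in any relevant set by $2.01\eps K$ --- a bound relative to the global $K=2^k$, not to $|S_r|$. Hence $S_0\supseteq S_1\supseteq\cdots$ but nothing forces the sizes below roughly $2\eps K$; one can check that even a genuine $(k,\eps)$-extractor with these parameters only yields $|\Con_{2.01}S_r|<2\eps K$ for every $r$, so the iteration may stall at a non-empty fixed point. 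If $a$ lies in it, your ``largest $j$ with $a\in S_j$'' does not exist and no fingerprint is produced. (The $1/\alpha$ counting argument you cite from Lemma~\ref{congested-enum} does not rescue this: it fails for arbitrary graphs --- e.g.\ when all left vertices share one neighbour, every vertex of $S$ is congested --- and in any case it cannot push the count below $\Theta(\eps K)$.) The paper's fix is to \emph{rescale} at each level: $m_i=k_i=\log(2.01\eps K_{i-1})$ with a fresh certified graph $G_i$, so the guarantee $2.01\eps K_i=(2.01\eps)^{i+1}K$ shrinks geometrically, the iteration empties out, and each level's fingerprint has only $O(2^{d_i})$ preimages. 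Repairing your proof essentially means replacing your single certified graph by this per-level sequence of certified graphs, which is the paper's proof.
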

\begin{proof}
Basically the proof proceeds as the respective proof of theorem~\ref{muchnik-space} in~\cite{tocs}. Here we replace an explicitly constructed extractor by a pseudorandom graph described in Sect.~\ref{muchnik-derandom}.

Let $\eps$ be a small constant and let $d=O(\log n)$ be such that a random bipartite graph with parameters ($n$, $m=k$, $d$) is a ($k$,~$\eps$)-extractor with probability greater than some positive constant $\mu$. Let $l$ be a parameter and $NW$ be a function from lemma~\ref{lowcon-derandom}. By lemmas~\ref{lowcon-derandom} and~\ref{muchnik-circuit} the output of $NW$ is a ($2.01$, $2.01\eps K$)-low-congesting graph for $\mathcal{S}_k$ with probability at least $\mu/2$. Applying the program from lemma~\ref{finding-good-seed}, one may find in $O(s)+\poly(n)$ space a seed $u$ for which $NW(u)$ is a ($2.01$, $2.01\eps K$)-low-congesting graph for $\mathcal{S}_{k,\,s}$. This $u$ has low complexity: to perform this search one needs to know parameters $n$, $k$ and $l=\poly(n)$, that is, $O(\log n)$ bits, and the space bound $s$, that requires $\log s$ bits. The last number may be reduced to $\log\log s$, because the space bound $s$ may be replaced by the least power of 2 exceeding $s$ keeping the needed space to be $O(s)+\poly(n)$. For what follows, fix this seed $u$ and the graph $G=NW(u)$.

By definition of the low-congesting property the number of $2.01$-congested vertices in the set $S=\{x\mid \KS^s(x|b)<k\}$ does not exceed $2.01\eps K$. Clearly, $a$ belongs to $S$. Firstly suppose that it is not $2.01$-congested. Then it has a neighbour outside the $2.01$-clot for $S$. This neighbour may be taken as $p$. Indeed, the length of $p$ equals $m=k$, hence its complexity is also less than $k+O(1)$. To specify $p$ knowing $a$, one needs to construct the graph $NW(u)$, for which only $O(\log n)+O(\log\log s)$ bits are necessary, and to know the number of $p$ among $a$'s neighbours, which is at most $d=O(\log n)$. Finding $a$ given $b$ and $p$ proceeds as follows: construct $G$, enumerate $S$ and choose the specified preimage of $p$ in $S$. Then $a$ is determined by the information needed to construct $G$ ($O(\log n+\log\log s)$ bits), information needed to enumerate $S$ (that is, $k$, $\log s$ and $b$) and the number of $a$ among preimages of $p$ (since $p$ is not in the $2.01$-clot, there are not more than $2.01DK/M=2.01D$ preimages, so $O(d)=O(\log n)$ bits are required). Summarizing, we get $O(\log n)+O(\log\log s)$ bits and $O(s)+\poly(n)$ space needed. Note that the fact that $m=k$ is crucial here: in the case $m=(1-\alpha)k$ the number of required bits would increase by $\alpha k$ and exceed the bound.

Now we turn to the case where $a$ \textit{is} $2.01$-congested. By lemma~\ref{congested-enum} the set $\Con_{2.01}S$ of $2.01$-congested vertices is enumerable in $O(s)+\poly(n)$ space. Take parameters $n_1=n$, $m_1=k_1=\log K_1=\log(2.01\eps K)$, $d_1=O(\log n)$ and $\eps_1=\eps$ such that an extractor with these parameters exists with probability greater than $\mu$. Lemmas~\ref{muchnik-circuit}, \ref{lowcon-derandom} and \ref{finding-good-seed} are applicable to the new situation, so we may find a new $u_1$ such that the graph $G_1=NW(u_1)$ is ($2.01$,~$2.01\eps K_1$)-low-congesting for $\Con_{2.01}\mathcal{S}_{k,\,s}$ with probability at least $\mu/2$. By assumption, $a$ belongs to the set $\Con_{2.01}S$. If it is not $2.01$-congested in the new set then we choose $p$ analogously to the initial situation. Otherwise we reduce the parameters again and take a low-congesting graph for $\Con_{2.01}^2\mathcal{S}_{k,\,s}$, and so on. By the ``moreover'' part of lemma~\ref{congested-enum}, this reduction may be performed iteratively for arbitrary number of times keeping the space limit to be $O(s)+\poly(n)$.

The total number of iterations is less than $\log_{1/2.01\eps}k=O(\log n)$. Finally $p$ is defined as a neighbour of $a$ not lying in the $2.01$-clot in graph $G_i$. To find $p$ knowing $a$ one needs to know $i$ and the same information as on the upper level. To find $a$ knowing $p$ and $b$ also only specifying $i$ is needed besides what has been specified on the upper level. So, all complexities and space limits are as claimed and the theorem is proven.
\end{proof}

\begin{center}
        \textbf{Acknowledgments}
\end{center}
        \nopagebreak

I want to thank my colleagues and advisors Andrei Romashchenko, Alexander Shen and Nikolay Vereshchagin for stating the problem and many useful comments. I also want to thank three anonymous referees for careful reading and precise comments. I am grateful to participants of seminars in Moscow State University and Moscow Institute for Physics and Technology for their attention and thoughtfulness.


\begin{thebibliography}{99}

\bibitem{ajtai}
M.~Ajtai. Approximate counting with uniform constant-depth circuits. In Advances in computational complexity theory, pages 1--20. American Mathematical Society, 1993.

\bibitem{fortnow}
H.~Buhrman, L.~Fortnow, S.~Laplante, Resource bounded Kolmogorov
complexity revisited, \emph{SIAM Journal on Computing},
\textbf{31}(3):887--905, 2002.

\bibitem{BLvM} H.~Buhrman, T.~Lee, D.~van~Melkebeek. Language
compression and pseudorandom generators. In \emph{Proc. of the
15th IEEE Conference on Computational Complexity}, IEEE, 2004.

\bibitem{dvir} Z. Dvir, A. Wigderson, Kakeya sets, new mergers and old extractors, \emph{Proceedings of the 49th Annual FOCS08}, 625--633, 2008.

\bibitem{guruswami}
V.~Guruswami, C.~Umans and S.~Vadhan, Unbalanced expanders and randomness extractors from Parvaresh-Vardy codes, \emph{Journal of the ACM}, \textbf{56}(4), 2009.

\bibitem{li-vitanyi}
M.~Li, P.~Vitanyi, An Introduction to Kolmogorov Complexity and
Its Applications, 2 ed., 1997. Springer-Verlag.

\bibitem{wigderson} C.-J. Lu, O. Reingold, S. Vadhan, A. Wigderson, Extractors: Optimal up to Constant Factors, \emph{35th Annual ACM Symposium}, STOC 2003, pp. 602--611, 2003.

\bibitem{muchnik-games}
An.A.~Muchnik, On basic structures of the descriptive theory of
algorithms, \emph{Soviet Math. Dokl.}, \textbf{32}, 671--674, 1985.

\bibitem{muchnik-codes}
An.~Muchnik, Conditional complexity and codes, \emph{Theoretical
Computer Science}, \textbf{271}(1--2), 97--109, 2002.

\bibitem{musatov-diplom}
D.~Musatov, Extractors and an effective variant of Muchnik's
theorem. Diplom (Master thesis). \emph{Faculty of Mechanics and
Mathematics, MSU}, 2006. \texttt{http://arxiv.org/abs/0811.3958}
(in Russian)

\bibitem{csr11}
D.~Musatov, Improving the space-bounded version of Muchnik's
conditional complexity theorem via ``naive'' derandomization, In:
A.~Kulikov, N.~Vereshchagin (eds.) CSR. Lecture Notes in Computer
Science, vol. 6651, pp. 64--76 (2011) (Conference version of this paper)

\bibitem{arxiv}
D.~Musatov, Improving the space-bounded version of Muchnik's conditional complexity theorem via ``naive'' derandomization, 2010, \texttt{http://arxiv.org/abs/1009.5108} (Online version of this paper)

\bibitem{tocs}
D.~Musatov, A.~Romashchenko, A.~Shen. Variations on Muchnik's conditional complexity theorem, \emph{Theory of Computing Systems}, \textbf{49}(2), 227--245, 2011

\bibitem{Nisan91} N.~Nisan, Pseudorandom bits for constant depth circuits, \emph{Combinatorica}, \textbf{11}(1): 63--70, 1991.

\bibitem{NW97}  N.~Nisan and A.~Wigderson, Hardness
vs. Randomness, \emph{Journal of Computer and System Sciences.} \textbf{49},
149--167, 1994.

\bibitem{extractor-bounds}
J.~Radhakrishnan, A.~Ta-Shma, Bounds for dispersers,
extractors, and depth-two superconcentrators, \emph{SIAM Journal
on Discrete Mathematics}, \textbf{13}(1): 2--24, 2000.

\bibitem{extractor-explicit}
O.~Reingold, R.~Shaltiel, A.~Wigderson, Extracting randomness
via repeated condensing, \emph{SIAM Journal on Computing}
\textbf{35}(5):1185--1209, 2006.

\bibitem{romash}
A.~Romashchenko, Pseudo-random graphs and bit probe schemes with one-sided error, CSR2011 

\bibitem{slepian-wolf} D.~Slepian and J.~K.~Wolf.
Noiseless coding of correlated information sources.
\emph{IEEE Transactions on information Theory}, \textbf{19}: 471--480, 1973.

\bibitem{trevisan} L.~Trevisan. Construction of extractors using
pseudo-random generators. In
\emph{Proc. 45th Annual Symposium on Foundations of Computer Science},
264--275.

\bibitem{zvonkin} A.~Zvonkin, L.~Levin. The complexity of finite objects and the development of
the concepts of information and randomness by means of the theory of algorithms.
\emph{Russian Mathematical Surveys}, \textbf{25}(6), 83--124, 1970

\end{thebibliography}
\end{document}